\numberwithin{equation}{section}
\newtheorem{thm}{Theorem}[section]
\newtheorem{lem}[thm]{Lemma}
\newtheorem{cor}[thm]{Corollary}
\newtheorem{prop}[thm]{Proposition}
\newtheorem{definition}[thm]{Definition}
\newtheorem{example}[thm]{Example}
\newtheorem{remark}[thm]{Remark}
\newcommand{\Tr}{{\rm {Tr}}}
\begin{document}

\title{Bivariate functions with low $c$-differential uniformity}
\author{ Yanan Wu, Pantelimon St\u{a}nic\u{a}, Chunlei Li, Nian Li, Xiangyong Zeng
}
\date{}
\maketitle

\begin{quote}
{\small {\bf Abstract:}  Starting with the multiplication of elements in $\mathbb{F}_{q}^2$ which is consistent with that over $\mathbb{F}_{q^2}$, where $q$ is a prime power, via some identification of the two environments, we investigate the $c$-differential uniformity for bivariate functions $F(x,y)=(G(x,y),H(x,y))$.  By carefully choosing the functions $G(x,y)$ and $H(x,y)$, we present several constructions of bivariate functions with low $c$-differential uniformity.  Many P$c$N and AP$c$N functions can be produced from our constructions.
}

{\small {\bf Keywords:}} Low $c$-differential uniformity, perfect and almost perfect $c$-nonlinearity, the bivariate function.
\end{quote}

\section{Introduction} \label{intro}

Differential attack, introduced by Biham and Shamir in \cite{BS}, is one of the most fundamental cryptanalytic approaches targeting symmetric-key primitives.  The ability of a cryptographic function applied in the S-box to resist  differential attack is quantified by the so-called differential uniformity \cite{N}. In \cite{BCJW}, the authors proposed a new type of differential  by utilizing modular multiplication  as a primitive operation and it  can be  used to attack some known ciphers such as a variant of the IDEA cipher. Very recently,  motivated by their work, Ellingsen et al. \cite{EFRS} introduced a new concept called multiplicative differential (and the corresponding $c$-differential uniformity) in the following way.

\begin{definition}
Let $\mathbb{F}_{p^n}$ denote the finite field with $p^{n}$ elements, where $p$ is a prime and $n$ is a positive integer. For $F:\;\mathbb{F}_{p^n}\rightarrow \mathbb{F}_{p^n}$ and $ c \in \mathbb{F}_{p^n}$, the (multiplicative) $c$-derivative of $F$ with respect to $a\in\mathbb{F}_{p^n}$ is defined as
$${ }_{c} D_{a} F(x)=F(x+a)-c F(x),$$
for all $x\in\mathbb{F}_{p^n}$.
 For $b \in \mathbb{F}_{p^{n}}$, we define ${ }_{c} \Delta_{F}(a, b)=$ $\#\left\{x \in \mathbb{F}_{p^{n}},{ }_{c} D_{a} F(x)=b\right\}$ and call ${ }_{c} \Delta_{F}=\max \left\{{ }_{c} \Delta_{F}\right.$ $(a, b): a, b \in \mathbb{F}_{p^{n}}$, and $a \neq 0$ if $\left.c=1\right\}$, the $c$-differential uniformity of $F$ (we say that $F$ is $\left(c,  \,_{c}\Delta_{F}\right)$-uniform).
\end{definition}

If ${ }_{c} \Delta_{F}=1$, then $F$ is called a {\em perfect $c$-nonlinear} $(\mathrm{P} c\mathrm{N})$ function. If ${ }_{c} \Delta_{F}=2$, then $F$ is called an {\em almost perfect $c$-nonlinear} $(\mathrm{AP} c \mathrm{N})$ function. Note that if $c=0$ or $a=0$, then ${ }_{c} D_{a} F(x)$ is a shift of the function $F$ and if $c=1$ and $a \neq 0$, then ${ }_{c} D_{a} F(x)$ becomes the usual derivative. Therefore, the concept of $c$-differential uniformity can be seen as the generalization of that of the classical differential uniformity. Concurrently, Bartoli and Timpanella in~\cite{BT} proposed the concept of $\beta$-planar functions, which is just the P$c$N with respect to $c=\beta$.

In \cite{EFRS}, the authors  investigated the $c$-differential uniformity of some well-known PN functions and the inverse function. Inspired by their work, more and more relevant results were developed. To the best of our knowledge, several kinds of methods have been used to construct functions with low $c$-differential uniformity, such as the AGW criterion, cyclotomic method, the perturbing and swapping method, as well as the switching method~\cite{BC,HPRS,LRS,SCP,S,SG,SRT,TZJT,WZ,ZH}.

Recently, in~\cite{AMS22}, it was shown that the graph of a P$c$N function corresponds to a difference set in a quasigroup, hence providing the first application of the $c$-differential uniformity (recall that difference sets give rise to symmetric designs, used in the construction of optimal self complementary codes, among other applications).
Moreover, in the same manuscript, it was suggested that the post-whitening keys   in an even number of rounds (like in the higher-order differential cryptanalysis) will disappear, when the round keys are connected via some of the constants $c$ in the higher order $c$-derivatives, or if just {\em one} of the sequence of derivatives is the classical one.

In \cite{C}, Carlet constructed new classes of APN functions by employing the bivariate function. Soon afterwards, other classes of APN functions were proposed by using the same method.   It turns out
that this approach is effective to  give rise to a new family of APN
functions. We refer the reader to  \cite{CBC,C1,LZLQ,T,ZP} for more details.   A natural problem is to investigate how  the $c$-differential uniformity of the known APN bivariate functions behaves. The computational data shows that the $c$-differential properties of these APN functions are not good, in general. Therefore, in this paper,  we aim to study low $c$-differential uniformity by virtue of bivariate functions  $F(x,y)\in\mathbb{F}_q[x,y]$.
By  utilizing  the 1-to-1 correspondence between  $\mathbb{F}_q^2$ and $\mathbb{F}_{q^2}$, we  firstly characterize the multiplication over $\mathbb{F}_{q}^2$
and then give a  definition of the $c$-differential uniformity of $F(x,y)$, which is consistent with the  $c$-differential uniformity in univariate form.
This is somewhat different, in general, than the approach of~\cite{S_WCC}, where the $c$-differential uniformity was taken as the maximum for each bivariate component.
Based on the newly defined concept
of the $c$-differential uniformity in this paper, we present an infinite class of bivariate functions and the upper bound of the $c$-differential uniformity is given for any $c=(c_1,c_2)\in\mathbb{F}_q^2\backslash\{(1,0)\}$.  By employing some well-known cryptographic functions, such as the Gold function and the inverse function, we proceed to give some concrete examples   and investigate the $c$-differential uniformity explicitly in any characteristic.  Further,  we propose several  classes of functions with low $c$-differential uniformity for any $c=(c_1,c_2)\in\mathbb{F}_q^2\backslash\{(1,0)\}$.     Moreover, by fixing $c=(c_1,c_2)\in\mathbb{F}_q\backslash\{1\}\times\{0\}$, other five classes of bivariate functions with low $c$-differential uniformity can be found in this paper. It is worth noting that many P$c$N and AP$c$N functions can be produced from our constructions.

Throughout  this paper, we always assume that $q=p^m$ and we denote by  $\mathbb{F}_q$,  the finite field with $q$ elements, where $p$ is a prime and $m$ is a positive integer. For $l\,|\,m$, we denote by $\Tr^m_l$, the relative trace function from $\mathbb{F}_{p^m}$ to  $\mathbb{F}_{p^l}$, defined by $\displaystyle \Tr^m_l(x)=\sum_{i=0}^{m/l-1}x^{q^i}$.

\section{Preliminaries}
In this section, with a natural 1-to-1 correspondence between $\mathbb{F}_{q^2}$  and $\mathbb{F}_{q}^2$ (via some primitive element), we first consider the multiplication between elements in $\mathbb{F}_{q}^2$, which is consistent with the multiplication in $\mathbb{F}_{q^2}$. Based on the multiplication operation, the $c$-differential uniformity of a bivariate function can be investigated in terms of a system of two bivariate equations.

Denote $SQ$ and $NSQ$ the set of square elements and non-square elements in $\mathbb{F}_q$, respectively.   Throughout this paper, we assume that $t\in\mathbb{F}_{q}$ satisfies $\Tr^m_1(t)=1$ when  $q$ is even and $1-4t\in NSQ$   when  $q$ is odd. Since it will be used later, we state the following known lemma.

\begin{lem}[\textup{\cite{BRS}}]
\label{2-equation}
Let $m$ be a positive integer, $p$ a prime number and $q=p^m$.  We then have:
\begin{enumerate}
\item[{\rm(1)}]
 The equation $x^{2}+a x+b=0$, with $a, b \in \mathbb{F}_{q}^*$ and  $q$ even, has two solutions in $\mathbb{F}_{q}$ if and only if $\operatorname{Tr}^m_1\left(\frac{b}{a^{2}}\right)=0$, and   no solution, otherwise.
 \item[{\rm(2)}]
 The equation $x^{2}+a x+b=0$, with $a, b \in \mathbb{F}_{q}$ and $q$ odd, has two (respectively, one) solutions in $\mathbb{F}_{q}$ if and only if the discriminant $a^{2}-4 b\in SQ$  (respectively, $a^{2}-4 b=0$).
\end{enumerate}
\end{lem}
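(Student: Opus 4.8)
The plan is to handle the two characteristics separately, reducing each quadratic to a normal form whose solvability is transparent. For part (2), with $q$ odd, I would complete the square: since $2$ is invertible, write $x^2 + ax + b = \bigl(x + \tfrac{a}{2}\bigr)^2 - \tfrac{a^2 - 4b}{4}$, so the equation is equivalent to $\bigl(x + \tfrac{a}{2}\bigr)^2 = \tfrac{a^2-4b}{4}$. Because $4$ is a nonzero square, the right-hand side lies in $SQ$, equals $0$, or lies in $NSQ$ exactly when $a^2 - 4b$ does. A nonzero square has two distinct square roots, hence two solutions $x = -\tfrac{a}{2} \pm \tfrac{1}{2}\sqrt{a^2-4b}$; if $a^2 - 4b = 0$ the unique (double) solution is $x = -a/2$; and a non-square has no square root, giving no solution. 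This settles part (2).

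For part (1), with $q$ even, division by $2$ is unavailable, so instead I would normalize by the substitution $x = ay$ (legitimate since $a \neq 0$). This turns $x^2 + ax + b = 0$ into $a^2(y^2 + y) + b = 0$, i.e. $y^2 + y = b/a^2$, an Artin--Schreier equation. The central fact is that the additive map $L\colon y \mapsto y^2 + y$ on $\mathbb{F}_q$ has image exactly equal to $\ker \Tr^m_1$. Indeed, $L$ is $\mathbb{F}_2$-linear with kernel $\{0,1\}$ (the roots of $y^2=y$), so by rank--nullity its image is an $\mathbb{F}_2$-subspace of codimension $1$; and since $\Tr^m_1(y^2+y) = \Tr^m_1(y^2) + \Tr^m_1(y) = 2\,\Tr^m_1(y) = 0$ (using that squaring is the Frobenius and hence preserves the trace), the image is contained in the hyperplane $\ker \Tr^m_1$, forcing equality by the dimension count. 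Consequently $y^2 + y = b/a^2$ is solvable if and only if $\Tr^m_1(b/a^2)=0$, and whenever it is solvable the two-to-one nature of $L$ yields exactly two solutions. Translating back via $x = ay$ gives the claim.

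The only subtle point worth flagging is why the even case never produces exactly one solution, in contrast to the odd case: in characteristic $2$ the polynomial $x^2+ax+b$ has derivative $a \neq 0$, so it is separable and cannot have a repeated root, which is precisely what the ``two or none'' dichotomy reflects, whereas for odd $q$ the degenerate discriminant $a^2-4b=0$ genuinely occurs. The main (and really the only) obstacle is establishing that the image of the Artin--Schreier map $L$ coincides with $\ker \Tr^m_1$; once that codimension argument is in place, everything else is routine bookkeeping.
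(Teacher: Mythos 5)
Your proof is correct. Note, however, that the paper does not prove this lemma at all --- it is quoted as a known result from the cited reference (Berlekamp--Rumsey--Solomon), so there is no in-paper argument to compare against. What you have written is the standard self-contained derivation: completing the square in odd characteristic, and in even characteristic normalizing to the Artin--Schreier equation $y^2+y=b/a^2$ and using that the image of the $\mathbb{F}_2$-linear map $y\mapsto y^2+y$ is exactly $\ker \Tr^m_1$ (by the rank--nullity count together with the containment from $\Tr^m_1(y^2)=\Tr^m_1(y)$). Both halves, including the separability remark explaining why the even case admits no double root, are accurate and complete.
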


The factorization of a quartic polynomial over finite field $\mathbb{F}_{2^{n}}$ can be given in terms of the roots of a related cubic
equation. Let $f(x)=x^{4}+a_{2}x^{2}+a_{1}x+a_{0}$ with $a_{0}a_{1} \neq 0$ and
 $g(y)=y^{3}+a_{2}y+a_{1}$ with the roots $r_{1},\,r_{2},\,r_{3}$. When the roots exist in $\mathbb{F}_{2^{m}}$, we set $w_{i}=a_{0}r_{i}^{2}/a_{1}^{2}$.

\begin{lem}[\textup{\cite{Leonard-Williams}}]
\label{quartic equation}
Let $f(x)=x^{4}+a_{2}x^{2}+a_{1}x+a_{0}\in \mathbb{F}_{2^{m}}[x]$ with $a_{0}a_{1} \neq 0$. The factorization of $f(x)$ over $\mathbb{F}_{2^{m}}$ are characterized as
follows:
\begin{enumerate}
\item[{\rm(1)}]
 $f=(1,1,1,1) \Leftrightarrow g=(1,1,1)$ and $\Tr_{1}^{m}\left(w_{1}\right)=\Tr_{1}^{m}\left(w_{2}\right)=\Tr_{1}^{m}\left(w_{3}\right)=0$;

\item[{\rm{(2)}}] $f=(2,2) \Leftrightarrow g=(1,1,1)$ and $\Tr_{1}^{m}\left(w_{1}\right)=0, \Tr_{1}^{m}\left(w_{2}\right)=\Tr_{1}^{m}\left(w_{3}\right)=1$;

\item[{\rm{(3)}}] $f=(1,3) \Leftrightarrow g=(3)$;

\item[{\rm{(4)}}] $f=(1,1,2) \Leftrightarrow g=(1,2)$ and $\Tr_{1}^{m}\left(w_{1}\right)=0$;

\item[{\rm{(5)}}] $f=(4) \Leftrightarrow g=(1,2)$ and $\Tr_{1}^{n}\left(w_{1}\right)=1$.
\end{enumerate}
\end{lem}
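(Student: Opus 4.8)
The plan is to reduce the factorization of the quartic $f$ to the much simpler problem of factoring the resolvent cubic $g$, exactly as in the classical treatment. First I would ask when $f$ splits into two monic quadratics over $\mathbb{F}_{2^m}$. Writing $f(x)=(x^2+bx+c_1)(x^2+bx+c_2)$, where the two linear coefficients must coincide because the coefficient of $x^3$ in $f$ is $0$, and comparing coefficients gives $c_1+c_2+b^2=a_2$, $b(c_1+c_2)=a_1$ and $c_1c_2=a_0$. Eliminating $c_1+c_2=a_1/b$ from the first two relations yields $b^3+a_2b+a_1=0$, i.e. $b$ must be a root $r_i$ of $g$. Conversely, each root $r_i\in\mathbb{F}_{2^m}$ of $g$ produces a candidate pair $(c_1,c_2)$ as the two roots of $z^2+(a_1/r_i)z+a_0=0$, with $c_1+c_2=a_1/r_i$ and $c_1c_2=a_0$.

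Next I would apply Lemma~\ref{2-equation}(1) to this auxiliary quadratic in $z$: it has its roots in $\mathbb{F}_{2^m}$ if and only if $\Tr^m_1\!\big(a_0/(a_1/r_i)^2\big)=\Tr^m_1(a_0r_i^2/a_1^2)=\Tr^m_1(w_i)=0$. Thus $\Tr^m_1(w_i)=0$ is precisely the condition that the pairing associated with $r_i$ gives a genuine factorization of $f$ into two quadratics over $\mathbb{F}_{2^m}$; when $\Tr^m_1(w_i)=1$ the two quadratics are Galois-conjugate over $\mathbb{F}_{2^m}$ and do not descend. The second ingredient is the Galois description of the roots of $g$: if $\theta_1,\dots,\theta_4$ are the roots of $f$ in its splitting field, then (using $\theta_1+\cdots+\theta_4=0$) the three roots of $g$ are the pair-sums $r_1=\theta_1+\theta_2$, $r_2=\theta_1+\theta_3$, $r_3=\theta_1+\theta_4$, and the Frobenius $x\mapsto x^{2^m}$ permutes the $r_i$ according to how it permutes the $\theta_j$.

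With these two tools I would run the case analysis on the factorization type of $g$. When $g=(3)$ the cubic has no root in $\mathbb{F}_{2^m}$, so $f$ admits no quadratic splitting; checking the Frobenius action shows that a cyclic group of order $4$ (the $f=(4)$ case) always fixes one pair-sum, so $g$ irreducible forces $f=(1,3)$, giving~(3). When $g=(1,2)$ exactly one root $r_1$ lies in $\mathbb{F}_{2^m}$; the value of $\Tr^m_1(w_1)$ then decides whether the corresponding pair $(c_1,c_2)$ is $\mathbb{F}_{2^m}$-rational, separating $f=(1,1,2)$ (trace $0$, one split and one inert quadratic) from the irreducible $f=(4)$ (trace $1$), giving~(4) and~(5). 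Finally, when $g=(1,1,1)$ all three pair-sums are rational, so $f$ is either $(1,1,1,1)$ or $(2,2)$; computing the constant pairs pairing by pairing shows that $f=(1,1,1,1)$ makes all three pairs rational (all $\Tr^m_1(w_i)=0$), whereas $f=(2,2)$ keeps only the ``true'' pairing rational and makes the two mixed pairings conjugate (one trace $0$, two traces $1$), giving~(1) and~(2).

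The main obstacle I anticipate is the bookkeeping in the $g=(1,1,1)$ branch: one has to verify that the trace pattern really does separate $(1,1,1,1)$ from $(2,2)$ and that no other pattern can occur, which amounts to checking that each pairing's constant pair lies in $\mathbb{F}_{2^m}$ exactly when that pairing respects the Frobenius orbits of the roots. The cleanest way to close this is to enumerate, for each factorization type of $f$, the Frobenius orbit structure of $\{r_1,r_2,r_3\}$ and of the associated constants, read off the resulting trace vector $\big(\Tr^m_1(w_1),\Tr^m_1(w_2),\Tr^m_1(w_3)\big)$, and observe that the five types yield five distinct signatures consisting of a type of $g$ together with a trace pattern; the stated biconditionals then follow at once.
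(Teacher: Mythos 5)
The paper does not prove this lemma at all: it is quoted verbatim from the cited reference of Leonard and Williams, so there is no in-paper argument to compare yours against. Your reconstruction is, in substance, the classical proof from that reference, and it is sound. The two pillars are exactly right: (i) a factorization $f=(x^2+bx+c_1)(x^2+bx+c_2)$ forces $b$ to be a root of the resolvent $g$, and for a rational root $r_i$ the pair $(c_1,c_2)$ descends to $\mathbb{F}_{2^m}$ iff $\Tr_1^m(w_i)=0$ by Lemma~\ref{2-equation}(1); (ii) the roots of $g$ are the pair-sums of the roots of $f$, so the factorization type of $g$ is read off from the Frobenius action on the three pairings, and the trace vector then separates $(1,1,1,1)$ from $(2,2)$ and $(1,1,2)$ from $(4)$. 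Two small points you gloss over but should record, since both use the hypothesis $a_0a_1\neq 0$: first, $a_1\neq 0$ makes both $f$ and $g$ separable (in characteristic $2$ one has $f'=a_1$ and $\gcd(g,g')=1$ exactly when $a_1\neq0$), so the four $\theta_j$ and the three $r_i$ are genuinely distinct and the pairing/root correspondence is a bijection; second, in the $g=(1,1,1)$, $f=(2,2)$ branch one must rule out the degenerate case $\theta_1\theta_3=\theta_2\theta_4$ (which would make a ``mixed'' constant pair rational despite being swapped by Frobenius), and this is excluded because the auxiliary quadratic $z^2+(a_1/r_i)z+a_0$ has nonzero linear coefficient, hence distinct roots. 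With those remarks the five signatures (type of $g$ together with the trace pattern) are pairwise distinct and exhaust all cases, so the biconditionals follow as you say.
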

According to Lemma \ref{2-equation}, one  can  verify that
\begin{eqnarray}\label{tc1c2}
tc_2^2+(1-c_1)c_2+(1-c_1)^2\ne0,
\end{eqnarray}
for any $(c_1,c_2)\in\mathbb{F}_q^2\backslash\{(1,0)\}$. Besides, it can be easily checked that the quadratic polynomial $f(x)=x^2+x+t$ is irreducible over $\mathbb{F}_q$. Let $\beta\in\mathbb{F}_{q^2}\backslash\mathbb{F}_q$ be a root of $f(x)$, then we can extend $\mathbb{F}_q$ to $\mathbb{F}_{q^2}$ based on  the basis  $\{1, \beta\}$. Also, $\mathbb{F}_{q^2}$ and $\mathbb{F}_{q}^2$ are in 1-to-1 correspondence  under the mapping
\begin{eqnarray}
\label{uni-bi}
 \varphi(x,y)=z=x+\beta y;\,\,z\in\mathbb{F}_{q^2},\,\,x,\,y\in\mathbb{F}_{q}.
 \end{eqnarray}
  From \eqref{uni-bi}, $x,\;y\in\mathbb{F}_q$ can be expressed by $z\in\mathbb{F}_{q^2}$ as
 \begin{equation}
 \label{xyz}
 x=\frac{\overline{\beta}z-\beta\overline{z}}{\overline{\beta}-\beta}, \;y=\frac{z-\overline{z}}{\beta-\overline{\beta}},
 \end{equation}
 where $\overline{z}$ is the Galois conjugate of $z$, i.e., $\overline{z}=z^q$. Therefore,
 $$
 \varphi^{-1}(z)=\varphi^{-1}(x+\beta y)=(x,y)=\left(\frac{\overline{\beta}z-\beta\overline{z}}{\overline{\beta}-\beta},\frac{z-\overline{z}}{\beta-\overline{\beta}}\right).
 $$
 For any $z_1=(x_1,y_1),\,z_2=(x_2,y_2)$, we have
 \begin{eqnarray*}
 \varphi^{-1}(z_1\cdot z_2)&=&\varphi^{-1}\left((x_1+\beta y_1)(x_2+\beta y_2)\right)\\
 &=&\varphi^{-1}(x_1x_2+\beta(x_1y_2+x_2y_1)+\beta^2(y_1y_2))\\&=&\varphi^{-1}((x_1x_2-ty_1y_2+\beta(x_1y_2+x_2y_1- y_1y_2)))\\
 &=&\varphi^{-1}\left((x_1x_2-ty_1y_2,x_1y_2+x_2y_1- y_1y_2)\right).
 \end{eqnarray*}
 To be consistent with the multiplication over $\mathbb{F}_{q^2}$,  we can define the  multiplication over $\mathbb{F}_{q}^2$ as
 $$(x_1,y_1)\cdot(x_2,y_2)=\varphi^{-1}\left(\varphi(x_1,y_1)\cdot\varphi(x_2,y_2)\right)=(x_1x_2-ty_1y_2,x_1y_2+x_2y_1-y_1y_2).
 $$

  Let $F(x,y)=\left(G(x,y),H(x,y)\right)$ be a   bivariate function from $\mathbb{F}_{q}^2$ to itself, where both $H(x,y)$ and $G(x,y)$ are bivariate functions from $\mathbb{F}_{q}^2$ to $\mathbb{F}_{q}$. Based on the multiplicative definition over $\mathbb{F}_{q}^2$ showed  above, we next  give a proper definition of the $c$-differential uniformity of  $F(x,y)$.

\begin{definition}
\label{def}
Let $F(x,y)=\left(G(x,y),H(x,y)\right)$ be a bivariate function from $\mathbb{F}_{q}\times\mathbb{F}_{q}$ to $\mathbb{F}_{q}\times\mathbb{F}_{q}$,  where  $H(x,y)$ and $G(x,y)$ are bivariate functions from $\mathbb{F}_{q}^2$ to $\mathbb{F}_{q}$. The $c$-differential equation $D_{c,a}F(x,y)=b$ with $a=(a_1,a_2),\;b=(b_1,b_2),\;c=(c_1,c_2)\in\mathbb{F}_q^2$ is given in the following system of equations
\begin{eqnarray}\label{c-def}
\left\{\begin{array}{cll}
G(x+a_1,y+a_2)-c_1G(x,y)+tc_2H(x,y)&=&b_1,\\
H(x+a_1,y+a_2)-(c_1-c_2)H(x,y)-c_2G(x,y)&=&b_2.
\end{array}\right.
\end{eqnarray}
  We define the $c$-Differential Distribution Table (DDT) entry at $(a,b)$ as
  \[
  {_{c}} \Delta_{F}(a, b)= \#\left\{(x,y) \in \mathbb{F}_{q}^2\,:\, D_{c,a} F(x,y)=b\right\}
  \]
   and the  $c$-differential uniformity of $F$ is  ${ }_{c} \Delta_{F}=\max \left\{{ }_{c} \Delta_{F}(a,b)\right.$ $: a, b \in \mathbb{F}_{q}^2$, and $a \neq (0,0)$ if $\left.c=(1,0)\right\}$  (we say that $F$ is $\left(c,  \;_{c}\Delta_{F}\right)$-uniform).
\end{definition}
It should be noted that the 1-to-1 correspondence between the univariate and bivariate representations of   $F(x,y)$ is  given by
\begin{eqnarray}
\label{univariate}
F(z)=H\left(\frac{\overline{\beta}z-\beta\overline{z}}{\overline{\beta}-\beta},\frac{z-\overline{z}}{\beta-\overline{\beta}}\right)+
\beta\, G\left(\frac{\overline{\beta}z-\beta\overline{z}}{\overline{\beta}-\beta},\frac{z-\overline{z}}{\beta-\overline{\beta}}\right).
\end{eqnarray}
 To prove our  results, we need the following lemmas.

\begin{lem}[\textup{\cite{B}}]
\label{mainlem}
Let $m$, $k$ be two positive integers and $\gcd(m,k)=d$. Let $p$ prime, $q=p^m$  and $r=[\mathbb{F}_q:\mathbb{F}_{p^{\gcd(m,k)}}]$ be the degree of the extension. Then the polynomial $f(x)=x^{p^k+1}+a x+b$ has exactly $0$, $1$, $2$ or $p^d+1$ roots in $\mathbb{F}_{q}$, when $a$, $b$ run through $\mathbb{F}_{q}^*$. In particular, the number of $b \in \mathbb{F}_{q}^{*}$ such that $x^{p^k+1}+x+b=0$ has exactly $p^{d}+1$ solutions in $\mathbb{F}_{q}$ is
$\frac{p^{(r-1) d}-p^{\epsilon d}}{p^{2 d}-1},
$
where $\epsilon=0$ if $r$ is odd and $\epsilon=1$, otherwise.
\end{lem}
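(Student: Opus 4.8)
The plan is to establish the two assertions in turn: first the root-count dichotomy $\{0,1,2,p^d+1\}$, and then the exact enumeration for the monic family. For the dichotomy I would first normalize away the coefficient $a$: since $a\in\mathbb{F}_q^*$, the substitution $x=\lambda u$ with $\lambda^{p^k}=a$ turns $f(x)=x^{p^k+1}+ax+b$ into $a\lambda\,(u^{p^k+1}+u+b')$ for a suitable $b'\in\mathbb{F}_q$, so the number of roots of $f$ equals that of $u^{p^k+1}+u+b'$ and it suffices to treat $a=1$. I would also record at the outset the observation that, for the monic polynomial $f_b(x)=x^{p^k+1}+x+b$, every $x\in\mathbb{F}_q$ is a root of exactly one $f_b$, namely the one with $b=-(x^{p^k+1}+x)$; hence the root sets $R_b=\{x:f_b(x)=0\}$ are pairwise disjoint and partition $\mathbb{F}_q$, a fact I will lean on heavily in the counting step.

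For the dichotomy itself, suppose $f$ has at least one root $x_0$. Shifting by $x_0$ and using $f(x_0)=0$ gives
\[
f(x_0+u)=u^{p^k+1}+x_0u^{p^k}+(x_0^{p^k}+a)u=u\bigl(u^{p^k}+x_0u^{p^k-1}+(x_0^{p^k}+a)\bigr),
\]
so the roots other than $x_0$ correspond to the nonzero solutions $u$ of $u^{p^k}+x_0u^{p^k-1}+c=0$ with $c=x_0^{p^k}+a$. The key move is the substitution $w=1/u$, which converts this into the \emph{affine linearized} equation $cw^{p^k}+x_0w+1=0$ (when $c=0$ the equation is linear in $w$ and yields at most one further root). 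Its solution set is either empty or a coset of the $\mathbb{F}_p$-subspace $\ker L$, where $L(w)=cw^{p^k}+x_0w$; the nonzero elements of $\ker L$ are exactly the solutions of $w^{p^k-1}=-x_0/c$, of which there are either none or $\gcd(p^k-1,p^m-1)=p^{\gcd(k,m)}-1=p^d-1$. Thus $\#\ker L\in\{1,p^d\}$, so the number of roots of $f$ distinct from $x_0$ is $0$, $1$, or $p^d$, giving a total of $0$, $1$, $2$, or $p^d+1$, as claimed.

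For the enumeration (with $a=1$), I would extract $N_{p^d+1}$, the number of $b$ with exactly $p^d+1$ roots, from the moment identities afforded by the partition above. The first moment reads $\sum_b|R_b|=q$, and the second moment reads
\[
\sum_b|R_b|(|R_b|-1)=\#\{(x_1,x_2):x_1\ne x_2,\ x_1^{p^k+1}+x_1=x_2^{p^k+1}+x_2\}.
\]
Writing $u=x_1-x_2\ne0$, the collision condition becomes, after dividing by $u$, the affine linearized equation $x_2^{p^k}+u^{p^k-1}x_2=-(u^{p^k}+1)$, whose number of solutions for each fixed $u$ is $0$ or $\#\ker T_u\in\{1,p^d\}$, exactly as in the dichotomy. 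Summing over $u$ and feeding the result, together with $\sum_b|R_b|=q=p^{rd}$ and the admissible values $|R_b|\in\{1,2,p^d+1\}$, into the linear relations among $N_1,N_2,N_{p^d+1}$ should yield the closed form $\frac{p^{(r-1)d}-p^{\epsilon d}}{p^{2d}-1}$.

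I expect the main obstacle to be the exact evaluation of this collision count and, in particular, the appearance of the parity-dependent term $p^{\epsilon d}$. Determining for how many $u$ the kernel $\ker T_u$ attains its full size $p^d$ \emph{and} the inhomogeneous equation is simultaneously solvable forces a careful analysis of when $-u^{p^k-1}$ lies in the image of the $(p^k-1)$-power map and of the resulting compatibility condition on $-(u^{p^k}+1)$; this is governed by $\gcd(p^k+1,p^m-1)$, whose value ($p^d+1$ versus a smaller divisor) hinges on the parity of $r=m/d$ and produces precisely the switch between $\epsilon=0$ and $\epsilon=1$. The linearization via $w=1/u$ is the clean conceptual engine, but the delicate bookkeeping lies in isolating the exactly-$(p^d+1)$ class from the coarser at-least structure and in checking that the degenerate value $b=0$, excluded by the constraint $b\in\mathbb{F}_q^*$, does not perturb the stated count.
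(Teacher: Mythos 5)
The paper does not prove this lemma; it is quoted verbatim from Bluher's paper \cite{B}, so there is no internal argument to compare against. Judged on its own terms, your proposal splits cleanly into a part that works and a part that does not. The dichotomy argument is correct and complete: the normalization $x=\lambda u$ with $\lambda^{p^k}=a$ is legitimate because $w\mapsto w^{p^k}$ is a bijection of $\mathbb{F}_q$; the shift $f(x_0+u)=u\bigl(u^{p^k}+x_0u^{p^k-1}+(x_0^{p^k}+a)\bigr)$ is right; and the inversion $w=1/u$ turning the residual factor into the affine linearized equation $cw^{p^k}+x_0w+1=0$, whose solution set is empty or a coset of a kernel of size $1$ or $p^d$ (via $\gcd(p^k-1,p^m-1)=p^d-1$), correctly yields the root counts $0,1,2,p^d+1$.

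The enumeration, however, has a genuine gap, and it is not merely the ``delicate bookkeeping'' you flag at the end. With unknowns $N_0,N_1,N_2,N_{p^d+1}$, the three relations you propose --- the count of $b$'s, the first moment $\sum_b|R_b|$, and the second factorial moment $\sum_b|R_b|(|R_b|-1)$ --- form an underdetermined system: the last two involve only $N_1,N_2,N_{p^d+1}$, which is two equations in three unknowns, and the zeroth moment only introduces $N_0$. So even a fully evaluated collision count cannot isolate $N_{p^d+1}$. The standard repair is the \emph{third} factorial moment $\sum_b|R_b|(|R_b|-1)(|R_b|-2)=(p^d+1)p^d(p^d-1)\,N_{p^d+1}$, which kills the contributions of $N_1$ and $N_2$ outright; but then the entire difficulty migrates into counting ordered triples of distinct $x_1,x_2,x_3$ with $x_i^{p^k+1}+x_i$ all equal, and that is precisely where the Coulter-type Weil sums (or Bluher's quadratic-form analysis) and the parity of $r=m/d$ --- hence the term $p^{\epsilon d}$ --- actually enter. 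Your sketch stops at ``summing over $u$'' for the pair count and ``should yield the closed form,'' i.e., it defers exactly the step that constitutes the proof. As written, the second assertion of the lemma is asserted rather than derived; either carry out the third-moment computation or cite Bluher's theorem for it, as the paper does.
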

%

\begin{remark}
Note that if $r=2$ in Lemma~\textup{\ref{mainlem}}, then $\frac{p^{(r-1) d}-p^{\epsilon d}}{p^{2 d}-1}=0$ and therefore, there is  no $a \in \mathbb{F}_{q}^{*}$ such that $x^{p^k+1}+x+a=0$ has  $p^{d}+1$ solutions in $\mathbb{F}_{q}$.
\end{remark}

The $c$-differential uniformity of the inverse function $f(x)=x^{-1}$ and Gold function $f(x)=x^{p^k+1}$ has been completely described. Below, we recall the results which can be used to simplify our proof of the main results in our paper.

\begin{lem}[\textup{\cite{HPRS})}]
\label{c-inverse}
 Let $q$ be even and $ c \in \mathbb{F}_{q}\backslash\{1\}$. Let $F: \mathbb{F}_{q} \rightarrow \mathbb{F}_{q}$ be the inverse function defined by $F(x)=$ $x^{q-2}$. We have:
 \begin{enumerate}
\item[{\rm(1)}]
 If $c=0$, then $F$ is $P$c$N$ (this is equivalent to $F$ being a permutation).
\item[{\rm(2)}] If $c \neq 0$ and $\operatorname{Tr}^m_{1}(c)=\operatorname{Tr}^m_{1}(1 / c)=1$, the $c$-differential uniformity of $F$ is $2$ (and hence $F$ is AP$c$N).
\item[{\rm(3)}] If $c \neq 0$ and $\operatorname{Tr}^m_{1}(1 / c)=0$, or $\operatorname{Tr}^m_{1}(c)=0$, the $c$-differential uniformity of $F$ is $3$.
 \end{enumerate}
\end{lem}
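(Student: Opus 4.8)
The plan is to translate the defining equation of the $c$-derivative into an explicit counting problem and reduce the generic case to a quadratic. Writing $F(x)=x^{-1}$ with the convention $0^{-1}=0$, the $c$-DDT entry ${}_c\Delta_F(a,b)$ counts the solutions $x\in\mathbb{F}_q$ of $(x+a)^{-1}-c\,x^{-1}=b$. Since $q$ is even, I would first dispose of the degenerate cases: if $c=0$ the map $x\mapsto x^{q-2}$ is a permutation (as $\gcd(q-2,q-1)=1$), so $F(x+a)=b$ has a unique solution and $F$ is PcN, which gives (1); and for $a=0$, or for $b=0$ with $a\neq0$, a direct computation using $c\neq1$ (so that $1+c\neq0$) shows there is exactly one solution, contributing nothing beyond a count of $1$.

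For the main regime $a\neq0$, $b\neq0$, I would separate the boundary inputs $x=0$ and $x=a$, where a denominator vanishes, from the generic range $x\notin\{0,a\}$. The input $x=0$ satisfies the equation precisely when $b=a^{-1}$, and $x=a$ precisely when $b=c\,a^{-1}$ (recall $-c=c$ in characteristic $2$); crucially, since $c\neq1$ these two conditions on $b$ are mutually exclusive, so the boundary contributes \emph{at most one} solution. On the generic range, clearing denominators turns the equation into the quadratic $bx^2+(ab+1+c)x+ca=0$, and I would verify that neither $x=0$ nor $x=a$ is a spurious root of it (substitution gives $ca\neq0$ and $a\neq0$ respectively), so its roots lie genuinely in the generic range. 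By Lemma~\ref{2-equation}(1) this quadratic has either $0$ or $2$ roots in $\mathbb{F}_q$, according to a trace condition. Combining these contributions shows ${}_c\Delta_F(a,b)\in\{0,1,2,3\}$, and the sharp bound $\le3$ is forced precisely by the mutual exclusivity of the two boundary contributions.

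The heart of the argument is then to evaluate the trace criterion at the two distinguished values of $b$. Dividing the quadratic by $b$, Lemma~\ref{2-equation}(1) gives two roots iff $\Tr^m_1\!\big(cab/(ab+1+c)^2\big)=0$. I would substitute $b=c\,a^{-1}$, where $ab+1+c=1$ and $cab=c^2$, so that the criterion becomes $\Tr^m_1(c^2)=\Tr^m_1(c)=0$; and substitute $b=a^{-1}$, where $ab+1+c=c$ and $cab=c$, so that the criterion becomes $\Tr^m_1(1/c)=0$. Hence a count of $3$ (two generic roots plus one boundary root) is attained exactly when $\Tr^m_1(c)=0$ or $\Tr^m_1(1/c)=0$, yielding uniformity $3$ and proving (3). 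Conversely, when $\Tr^m_1(c)=\Tr^m_1(1/c)=1$ the boundary values of $b$ force $0$ generic roots (count $1$), so no entry reaches $3$; to finish (2) I would exhibit some $(a,b)$ with $b\notin\{a^{-1},c\,a^{-1}\}$ for which the trace condition holds, giving an entry equal to $2$ and hence ${}_c\Delta_F=2$ (APcN).

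The step I expect to be the main obstacle is the careful bookkeeping that fuses the generic quadratic count with the boundary contributions into a sharp global maximum: one must simultaneously verify that the boundary roots are not already counted among the quadratic's roots, use $c\neq1$ to rule out both boundary conditions holding at once, and confirm in case (2) that the value $2$ is genuinely realized rather than merely an upper bound (a short counting argument, since $\Tr^m_1$ vanishes on half of $\mathbb{F}_q$ while only finitely many $(a,b)$ are excluded by the boundary constraints). The trace simplifications themselves are brief once the substitutions $b=c\,a^{-1}$ and $b=a^{-1}$ are made, but extracting the \emph{exact} uniformity, rather than an inequality, hinges on this combinatorial accounting.
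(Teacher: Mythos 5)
The paper does not prove this lemma; it is quoted verbatim from the cited reference \cite{HPRS}, so there is no internal proof to compare against. Your reconstruction follows what is essentially the standard argument of that source: separate the boundary inputs $x=0$ and $x=a$, clear denominators to get the quadratic $bx^{2}+(ab+1+c)x+ca=0$ on the generic range, apply Lemma~\ref{2-equation}(1), and evaluate the trace criterion $\Tr^m_1\bigl(cab/(ab+1+c)^{2}\bigr)$ at the two distinguished values $b=a^{-1}$ and $b=ca^{-1}$. The substitutions, the mutual exclusivity of the boundary conditions (which uses $c\neq 1$), and the verification that the quadratic has no spurious roots at $0$ or $a$ are all correct, and the conclusion for part (3) follows.

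Two points need patching. First, Lemma~\ref{2-equation}(1) requires a nonzero linear coefficient, so the subcase $ab+1+c=0$ must be treated separately: there the equation degenerates to $bx^{2}=ca$, which has exactly one root since squaring is bijective in characteristic $2$. This is harmless for the maximum (one checks that $ab+1+c=0$ is incompatible with $b\in\{a^{-1},ca^{-1}\}$ when $c\neq 0,1$, so the total count is $1$ there), but the dichotomy ``$0$ or $2$ roots'' as stated is false without excluding it. Second, your argument that the value $2$ is attained in case (2) is not yet a proof: the trace condition is imposed on the rational function $cab/(ab+1+c)^{2}$ of $b$, not on $b$ itself, so ``$\Tr^m_1$ vanishes on half of $\mathbb{F}_q$'' does not directly apply. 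One standard fix is to count the $b$ with $\Tr^m_1\bigl(cab/(ab+1+c)^{2}\bigr)=0$ via additive characters (a Kloosterman-type sum bounded by $2\sqrt{q}$ forces a positive count for $q\geq 16$, with small cases checked directly); alternatively, observe that if every entry were at most $1$ the $c$-derivative would be a bijection for each $a$, and derive a contradiction. Everything else, including the identity $\Tr^m_1(c^{2})=\Tr^m_1(c)$ used at $b=ca^{-1}$, is fine.
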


\begin{lem}[\textup{\cite{HPRS})}]
\label{c-inverse-odd}
 Let $q$ be odd and $ c \in \mathbb{F}_{q}\backslash\{1\}$. Let $F: \mathbb{F}_{q} \rightarrow \mathbb{F}_{q}$ be the inverse function defined by $F(x)=$ $x^{q-2}$. We have:
\begin{enumerate}
\item[{\rm(1)}] If $c=0$, then $F$ is P$c$N (that means that $F$ is a permutation).
\item[{\rm(2)}] If $c \neq 0,$ $c=4,4^{-1}$ or $\left(c^{2}-4 c\right) \in NSQ$ and $(1-4 c) \in NSQ$, the $c$-differential uniformity of $F$ is $2$ (and hence $F$ is AP$c$N).
\item[{\rm(3)}] If $c \neq 0,4,4^{-1},\left(c^{2}-4 c\right) \in SQ$, or $(1-4 c) \in SQ$, the $c$-differential uniformity of $F$ is $3$.
\end{enumerate}
\end{lem}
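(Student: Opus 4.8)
The plan is to count, for each $a,b\in\mathbb{F}_q$, the solutions $x\in\mathbb{F}_q$ of ${}_cD_aF(x)=(x+a)^{q-2}-c\,x^{q-2}=b$, keeping in mind $0^{q-2}=0$. Two cases are immediate. If $c=0$, then ${}_0D_aF(x)=F(x+a)$ is a shift of the inverse map, which is a permutation of $\mathbb{F}_q$; hence the equation has a unique solution for every $b$ and $F$ is P$c$N. If $a=0$ (allowed here since $c\neq1$), the equation becomes $(1-c)x^{q-2}=b$ with $1-c\neq0$, which again has at most one solution. So the substantive regime is $a\neq0$, $c\neq0$, and the task splits into proving the count never exceeds $3$ and deciding exactly when $3$ is attained.

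For $a\neq0$ I would split according to $x=0$, $x=-a$, and $x\notin\{0,-a\}$. The two boundary points give ${}_cD_aF(0)=a^{-1}$ and ${}_cD_aF(-a)=c\,a^{-1}$, so $x=0$ is a solution iff $b=a^{-1}$ and $x=-a$ is a solution iff $b=c\,a^{-1}$; since $c\neq1$ these conditions are mutually exclusive, so the boundary contributes at most one solution. For $x\notin\{0,-a\}$, clearing denominators in $\tfrac{1}{x+a}-\tfrac{c}{x}=b$ yields the quadratic
\[
bx^{2}+(ab+c-1)x+ca=0 .
\]
A direct substitution shows that neither $x=0$ (constant term $ca\neq0$) nor $x=-a$ (value $a\neq0$) is a root, so its roots are genuinely distinct from the boundary solutions. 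When $b=0$ the equation is linear with a single admissible root; when $b\neq0$, Lemma~\ref{2-equation}(2) gives $0$, $1$, or $2$ roots according to whether the discriminant is a non-square, zero, or a non-zero square. Combining, the $c$-DDT entry is at most $1+2=3$, which already yields the upper bound.

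To pin down the exact value I would evaluate the discriminant $D(a,b)=(ab+c-1)^{2}-4abc$ at the two boundary-relevant values of $b$: a short computation gives $D=c^{2}-4c$ when $b=a^{-1}$ and $D=1-4c$ when $b=c\,a^{-1}$. Thus a count of $3$ occurs precisely when a boundary solution coexists with two quadratic roots, that is, when $c^{2}-4c\in SQ$ (taking $b=a^{-1}$) or $1-4c\in SQ$ (taking $b=c\,a^{-1}$). In the degenerate subcases $c=4$ and $c=4^{-1}$ one of these discriminants vanishes ($c^{2}-4c=0$, resp.\ $1-4c=0$), so the corresponding quadratic acquires a repeated root and that branch contributes only $1+1=2$, the overall value being governed by the surviving discriminant. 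When neither $c^{2}-4c$ nor $1-4c$ is a non-zero square, both boundary branches contribute at most $1+0=1$, while a boundary-free $b$ contributes at most the two generic roots; since $D(a,\cdot)$ is a non-degenerate quadratic in $b$ it attains a non-zero square for some admissible $b$, so the uniformity is exactly $2$ and $F$ is AP$c$N (in particular never P$c$N for $c\neq0$).

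The main obstacle is the boundary bookkeeping: one must verify that $x=0$ and $x=-a$ never solve the quadratic, that they cannot be simultaneously valid, and that the two special discriminants $c^{2}-4c$ and $1-4c$ are exactly the ones attached to the two boundary values of $b$. The degenerate cases $c=4,4^{-1}$ (repeated root) and the lower bound (exhibiting a pair with two solutions, hence $F$ is not P$c$N) are the delicate points where the estimate could silently slip between $2$ and $3$. As a sanity check, this odd-characteristic dichotomy mirrors the even-characteristic statement of Lemma~\ref{c-inverse}, where the two trace conditions $\Tr^m_1(c)$ and $\Tr^m_1(1/c)$ play the role that $c^{2}-4c$ and $1-4c$ play here.
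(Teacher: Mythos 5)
The paper does not actually prove Lemma~\ref{c-inverse-odd}: it is imported verbatim from~\cite{HPRS} (the argument originates in~\cite{EFRS}), so there is no in-paper proof to compare against. Your reconstruction follows the standard route from that source: reduce to the quadratic $bx^{2}+(ab+c-1)x+ca=0$ away from $x\in\{0,-a\}$, note that the boundary points contribute at most one solution (only for $b=a^{-1}$, resp.\ $b=ca^{-1}$, and these are mutually exclusive since $c\neq 1$), verify that the quadratic never vanishes at $0$ or $-a$, and read off the two discriminants $c^{2}-4c$ and $1-4c$. All of these computations are correct and give the upper bound $3$. The one minor gap is the lower bound: you assert without proof that $D(a,\cdot)=(ab-(c+1))^{2}-4c$ attains a nonzero square at some nonzero, non-boundary $b$, which is what makes the uniformity exactly $2$ in case (2) rather than at most $2$. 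A one-line count (the conic $u^{2}-v^{2}=4c$ has $q-1$ points, at most two of which have $v=0$) closes this for $q\geq 5$; for $q=3$ the ``inverse'' map is the identity and the statement degenerates anyway.

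The more substantive point is that your (correct) analysis of the degenerate subcases does not prove item (2) as it is written. You observe that for $c=4$ or $c=4^{-1}$ the branch with vanishing discriminant contributes $1+1=2$ and that ``the overall value is governed by the surviving discriminant.'' That is exactly right, and it means $c\in\{4,4^{-1}\}$ alone does \emph{not} force AP$c$N: if $c=4$ and $1-4c\in SQ$, the branch $b=ca^{-1}$ still produces three solutions. Concretely, for $q=17$ and $c=4$ one has $1-4c=-15=2=6^{2}\in SQ$, and $(x+1)^{15}-4x^{15}=4$ has the three roots $x=2,9,16$, so ${}_{c}\Delta_{F}=3$ there, contradicting clause (2) under its natural reading. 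What your argument actually establishes is the corrected dichotomy: the uniformity is $3$ if and only if $c^{2}-4c\in SQ$ or $1-4c\in SQ$, and $2$ otherwise (i.e.\ when each of the two quantities is either zero or a non-square). So the issue lies in the quoted statement rather than in your computation; you should flag this discrepancy explicitly instead of absorbing it into the phrase about the surviving discriminant, since the lemma is later invoked (e.g.\ in Corollary~\ref{I-odd p}) with the same imprecise case division.
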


\section{The main results}
In this section, we mainly focus on the bivariate functions of the form
$F(x,y)=\left(G(x,y),H(x,y)\right).$
 We will investigate the $c$-differential uniformity properties of $F$ for certain functions $G,H$, and present several families of bivariate functions with low $c$-differential uniformity. It is worth noting that many P$c$N and AP$c$N functions can be produced from our constructions.

\subsection{Functions with low $c$-differential uniformity for any $c\in\mathbb{F}_{q^2}$}

Firstly, we consider the bivariate function $F(x,y)$ of the form $F(x,y)=(g(x),h(y)+g(x))$, for some univariate $g,h$.  According to Definition~\ref{def}, the    $c$-differential uniformity of $F(x,y)$ is given by
\begin{eqnarray}\label{first-equation}
\left\{\begin{array}{cll}
g(x+a_1)-(c_1-tc_2)g(x)+tc_2h(y)&=&b_1,\\
g(x+a_1)-c_1g(x)+h(y+a_2)-(c_1-c_2)h(y)&=&b_2.
\end{array}\right.
\end{eqnarray}
We first give the upper bound about the $c$-differential uniformity of $F(x,y)$ as follows.

\begin{lem} \label{p1}Let $F(x,y)=\left(g(x),h(y)+g(x)\right)$ and  $c=(c_1,c_2)\in\mathbb{F}_{q}^2\backslash\{(1,0)\}$. If $g(x)$ is differentially $(c_1-tc_2,\delta_1)$-uniform and $h(x)$ is differentially $(c_1-(1-t)c_2,\delta_2)$-uniform, then $_{c}\Delta_F\leq\delta_1\delta_2$. In particular,  $_{c}\Delta_F=\delta_1\delta_2$ when $c_2=0$.
\end{lem}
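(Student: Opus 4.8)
The plan is to reduce the bivariate system \eqref{first-equation} to two single-variable $c'$-differential equations, one governing $g$ and one governing $h$, and then to count its solutions by conditioning. First I would combine the two equations of \eqref{first-equation}: subtracting the first from the second cancels the common term $g(x+a_1)$ and produces $h(y+a_2)-\bigl(c_1-(1-t)c_2\bigr)h(y)=(b_2-b_1)+tc_2\,g(x)$, which is exactly the $\bigl(c_1-(1-t)c_2\bigr)$-derivative of $h$ with respect to $a_2$, while the first equation of \eqref{first-equation} is the $(c_1-tc_2)$-derivative of $g$ with respect to $a_1$. This accounts for the two constants appearing in the hypotheses. I would also record that $t\neq0$ (forced by $\Tr^m_1(t)=1$ in even characteristic and by $1-4t\in NSQ$ in odd characteristic), so $tc_2\neq0$ precisely when $c_2\neq0$, and that by Lemma \ref{2-equation} together with \eqref{tc1c2} the coefficients that arise when solving these linear relations are nonzero.

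Next I would settle the equality claim for $c_2=0$, which is the clean, fully decoupled case. When $c_2=0$ one has $c_1-tc_2=c_1-(1-t)c_2=c_1$ and the coupling terms $tc_2\,h(y)$ and $tc_2\,g(x)$ vanish, so the first equation becomes the pure relation ${}_{c_1}D_{a_1}g(x)=b_1$ and the derived relation becomes ${}_{c_1}D_{a_2}h(y)=b_2-b_1$. The solution set is therefore the Cartesian product $\{x:{}_{c_1}D_{a_1}g(x)=b_1\}\times\{y:{}_{c_1}D_{a_2}h(y)=b_2-b_1\}$, whose cardinality is the product of the two factor counts. Each factor is at most $\delta_1$ (resp.\ $\delta_2$), and since $c\neq(1,0)$ with $c_2=0$ forces $c_1\neq1$, the value $a=0$ is admissible, so I can choose $a_1,b_1$ and independently $a_2,b_2$ realizing each maximum separately; this yields exactly $\delta_1\delta_2$.

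Finally, for the general upper bound with $c_2\neq0$ the two equations are genuinely coupled, because the right-hand side of the $g$-equation contains $h(y)$ and that of the $h$-equation contains $g(x)$. My plan is to break the coupling by solving the first equation for $h(y)$ --- legitimate since $tc_2\neq0$ --- and substituting into the derived $h$-relation, so that the system becomes single-variable in spirit: for each fixed value of $g(x)$ the admissible $y$ form the solution set of one $\bigl(c_1-(1-t)c_2\bigr)$-derivative equation for $h$, hence at most $\delta_2$ of them, and I aim to show that the admissible $x$ are cut out by a single $(c_1-tc_2)$-derivative equation for $g$, hence at most $\delta_1$ of them. The hard part will be to certify this product structure rather than let it degrade: a naive per-coordinate estimate (for each $y$ at most $\delta_1$ choices of $x$, for each $x$ at most $\delta_2$ choices of $y$) only gives $q\cdot\min(\delta_1,\delta_2)$, so the argument must genuinely reduce the elimination to one $g$-type and one $h$-type equation with the prescribed constants. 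I also expect to isolate the degenerate subcases $c_1-tc_2=1$ or $c_1-(1-t)c_2=1$ occurring together with $a_1=0$ or $a_2=0$, where the relevant single-variable maximum has to be read off with $a=0$ allowed; controlling these cleanly is the main technical obstacle.
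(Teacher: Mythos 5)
Your reduction of \eqref{first-equation} to the pair $g(x+a_1)-(c_1-tc_2)g(x)=b_1-tc_2h(y)$ and $h(y+a_2)-\bigl(c_1-(1-t)c_2\bigr)h(y)=(b_2-b_1)+tc_2\,g(x)$ is correct, and your treatment of $c_2=0$ (Cartesian-product solution set; both maxima achievable independently because $b_2-b_1$ ranges freely and $c_1\ne1$ keeps $a=0$ admissible) is sound, so the equality claim is fine. The problem is that the case $c_2\ne0$, which is the entire content of the upper bound, is not actually proved: you flag ``certifying the product structure'' as the main technical obstacle and then stop, and that obstacle is real. Conditioning on a value $\xi$ of $g(x)$ gives at most $\delta_2$ admissible $y$, and each such $y$ gives at most $\delta_1$ admissible $x$ from the $g$-equation, but nothing forces those $x$ to satisfy $g(x)=\xi$; summing over the values $\xi$ attained on the solution set destroys the product bound. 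Likewise, solving the first equation for $h(y)=(b_1-P(x))/(tc_2)$ with $P(x)=g(x+a_1)-(c_1-tc_2)g(x)$ and substituting does not yield a single $\bigl(c_1-(1-t)c_2\bigr)$-derivative equation in $y$ alone, because it is $g(x)$, not $P(x)$, that survives in the second equation, and for general $g$ the one cannot be expressed through the other.

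There is no proof in the paper to compare against --- Lemma~\ref{p1} is stated without one --- but the one case the paper does work out, Theorem~\ref{general-linear} with $g=L$ linearized, shows what the elimination really produces: it succeeds only because $L(x+a_1)-c_1L(x)$ is affine in $L(x)$, and the resulting single equation for $h$ is $h(y+a_2)-\frac{A}{B}h(y)=b_3$ with $\frac{A}{B}=(c_1-c_2)+\frac{tc_2(1-c_1)}{1-c_1+tc_2}$, which coincides with $c_1-(1-t)c_2$ only when $tc_2=0$. So even in the most favorable case the constant attached to $h$ after elimination is not the one appearing in the hypothesis of Lemma~\ref{p1}. Any correct argument for the stated bound when $c_2\ne0$ must therefore run by a mechanism genuinely different from the decoupling you propose, and you should treat the general claim itself with some suspicion rather than assume the product structure will emerge.
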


 In particular, if $g(x)$ is a  linearized polynomial over $\mathbb{F}_q$ and $h(y)$ is a permutation polynomial over $\mathbb{F}_q$ in Lemma \ref{p1}, then we have the following result.

 \begin{thm}
 \label{general-linear}
 Let  $q=p^m$ and  $F(x,y)=\left(L(x),h(y)+L(x)\right)$, where  $L(x)$ is a linearized permutation polynomial, $h(y)$ is a permutation polynomial over $\mathbb{F}_q$. Let $c=(c_1,c_2)\in\mathbb{F}_{q}^2\backslash\{(1,0)\}$ and denote $A=(c_1-c_2)(1-c_1+tc_2)+tc_2(1-c_1)$  and $B=1-c_1+tc_2$. If $AB=0,$ then $F(x,y)$ is a P$c$N function;
If  $AB\ne0$ and $_\frac{A}{B}\Delta_g=\delta$, then $F(x,y)$ is differentially $(c,\delta)$-uniform.
 \end{thm}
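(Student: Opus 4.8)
The plan is to use the additivity of the linearized permutation $L$ to collapse the system \eqref{first-equation} into a single equation that, after a bijective change of variable, is a $\tfrac{A}{B}$-differential equation for $h$. Since $L$ is linearized it is additive, so setting $u=L(x)$ (the map $x\mapsto u$ is a bijection because $L$ permutes $\mathbb{F}_q$) and using $L(x+a_1)=u+L(a_1)$, the two equations of \eqref{first-equation} become
\[
Bu+tc_2h(y)=b_1-L(a_1),\qquad (1-c_1)u+h(y+a_2)-(c_1-c_2)h(y)=b_2-L(a_1),
\]
with $B=1-c_1+tc_2$. Counting solutions $(x,y)$ of the original system is thus identical to counting pairs $(u,y)$ of this reduced system.

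First I would treat the generic case $B\neq0$. Solving the first equation for $u$ and substituting into the second eliminates $u$; after clearing $B$ the result has the shape $Bh(y+a_2)-Ah(y)=\gamma$ for a constant $\gamma=\gamma(a,b)$, where the crucial (short but essential) algebraic identity is that the coefficient of $h(y)$ is exactly $-A$ with $A=(c_1-c_2)B+tc_2(1-c_1)$. Dividing by $B$ yields $h(y+a_2)-\tfrac{A}{B}h(y)=\gamma/B$, and for each admissible $y$ the first equation returns a unique $u$, hence a unique $x$; therefore ${}_{c}\Delta_F(a,b)$ equals the number of $y$ solving this $\tfrac{A}{B}$-differential equation for $h$. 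If $A=0$ this reads $h(y+a_2)=\gamma/B$, with a single solution since $h$ is a permutation, so $F$ is P$c$N. If $A\neq0$, then as $(a,b)$ ranges over all admissible values, $a_2$ and $\gamma/B$ sweep over $\mathbb{F}_q$ independently (the coefficient of $b_2$ in $\gamma$ is $B\neq0$), whence $\max_{a,b}{}_{c}\Delta_F(a,b)={}_{\frac{A}{B}}\Delta_h=\delta$.

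The remaining case is $B=0$, i.e. $c_1=1+tc_2$. Since $c\neq(1,0)$ we must have $c_2\neq0$, and because $t\neq0$ (forced by $\Tr^m_1(t)=1$ in even characteristic and $1-4t\in NSQ$ in odd characteristic) the first equation $tc_2h(y)=b_1-L(a_1)$ determines $y$ uniquely; then $1-c_1=-tc_2\neq0$ lets the second equation determine $u$, hence $x$, uniquely, so ${}_{c}\Delta_F=1$ and $F$ is P$c$N. Collecting the cases gives P$c$N when $AB=0$ and differential $(c,\delta)$-uniformity when $AB\neq0$.

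I expect the main subtlety to be the degenerate behaviour that would occur if $\tfrac{A}{B}=1$: when $A=B$ and $a_2=0$ the reduced equation collapses to $0=\gamma/B$, which could admit all $q$ values of $y$ and wreck the count. The point that rescues the argument is that $\tfrac{A}{B}=1$ never happens: a direct expansion gives $A-B=-\bigl((c_1-1)^2-c_2(c_1-1)+tc_2^2\bigr)$, and after substituting $1-c_1$ this is precisely the expression shown to be nonzero in \eqref{tc1c2}, so $A\neq B$ for every $c\in\mathbb{F}_q^2\setminus\{(1,0)\}$. Hence ${}_{\frac{A}{B}}\Delta_h$ is the maximum over all shifts $a_2$ (including $a_2=0$) with no exclusion, matching exactly the range produced above. (The quantity governing $\delta$ is the $\tfrac{A}{B}$-differential uniformity of $h$, consistent with Lemma~\ref{p1}.)
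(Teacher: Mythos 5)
Your proof is correct and follows essentially the same route as the paper's: the same case split on $B=1-c_1+tc_2$ being zero or not, the same elimination of $L(x)$ to reduce the system to $h(y+a_2)-\tfrac{A}{B}h(y)=b_3$, and the same appeal to \eqref{tc1c2} to rule out $\tfrac{A}{B}=1$. You are in fact somewhat more explicit than the paper on a few points it leaves implicit (that $t\neq0$ so $tc_2\neq0$ in the $B=0$ case, and that $a_2$ and the right-hand side $b_3$ range independently so the uniformity $_{\frac{A}{B}}\Delta_h=\delta$, which the theorem statement misprints as $_{\frac{A}{B}}\Delta_g$, is actually attained).
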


 \begin{proof}
 By \eqref{first-equation}, we need to solve the following system of equations
\begin{numcases}{}
(1-c_1+tc_2)L(x)+tc_2h(y)=b_1-L(a_1),\label{I-11}\\
(1-c_1)L(x)+h(y+a_2)-(c_1-c_2)h(y)=b_2-L(a_1),\label{I-12}
\end{numcases}
where $(a_1,a_2),\,\,(b_1,b_2)\in\mathbb{F}_q^2$. If $c=(c_1,c_2)=(0,0)$, it can be easily checked that $_c\Delta_F=1$. We assume now that $c=(c_1,c_2)\ne(0,0)$ and $B=1-c_1+tc_2$.

\noindent
{\bf Case I:} If $B=0$, then $c_2\ne 0$ due to $(c_1,c_2)\ne(1,0)$. From Equation~\eqref{I-11} and $h(y)$  permuting  $\mathbb{F}_q$, we can see that $y$ is uniquely determined by $a_1$, $b_1$. Moreover, for a fixed $y$ satisfying \eqref{I-11}, there is exactly one solution $x$ to \eqref{I-12} due to $L(x)$ being a  linearized permutation polynomial. Thus, we have $_c\Delta_F=1$.

\noindent
{\bf Case II:} If $B\ne0$, by Equation~\eqref{I-11}, then
\begin{eqnarray}\label{I-21}
L(x)=\frac{b_1-L(a_1)-tc_2h(y)}{B}.
\end{eqnarray}
 Substituting Equation~\eqref{I-21} into Equation~\eqref{I-12}, we get
$$h(y+a_2)-\Big(c_1-c_2+\frac{tc_2(1-c_1)}{B}\Big)h(y)=b_2-L(a_1)-(1-c_1)\frac{b_1-L(a_1)}{B}.$$
Since $b_1$ and $b_2$ run through $\mathbb{F}_q$, it is equivalent to considering
\begin{eqnarray}\label{I-22}
h(y+a_2)-\frac{A}{B}h(y)=b_3
\end{eqnarray}
with $A=(c_1-c_2)(1-c_1+tc_2)+tc_2(1-c_1)$  and $b_3\in\mathbb{F}_q$.  Note that $L(x)$ is a permutation polynomial, which implies that $x$ can be uniquely determined by $y$ from \eqref{I-21}. Therefore, in this case, the number of solutions of Equations~\eqref{I-11}-\eqref{I-12} is equivalent to that of Equation~\eqref{I-22}.  Now, we focus on solving \eqref{I-22}. Firstly, we claim that $\frac{A}{B}\ne1$. Suppose $\frac{A}{B}=1$, then $(1-c_1)tc_2=(1-c_1+c_2)(1-c_1+tc_2)$, which can be reduced to $tc_2^2+(1-c_1)c_2+(1-c_1)^2=0$.   This is impossible due to $(c_1,c_2)\ne(1,0)$ and Equation~\eqref{tc1c2}. Further, we can see that \eqref{I-22} always has only one solution when $A=0$. When $AB\ne 0$, then $F(x,y)$ is differentially $(c,\delta)$-uniform, since $_\frac{A}{B}\Delta g(y)=\delta$.
 \end{proof}

 \begin{remark}
 \label{L-I-R}
 From the proof above we can see that when $L(x)$ is an $s$-to-$1$ linearized polynomial in Theorem~\textup{\ref{general-linear}}, we can derive the following result:
\begin{enumerate}
\item[{\rm(1)}] If $AB=0$,  then $_c\Delta_F=s$;
\item[{\rm(2)}] If  $AB\ne0$ and and $_\frac{A}{B}\Delta_g=\delta$, then $_c\Delta_F\leq\delta s$.
\end{enumerate}

\end{remark}

According to the result in Theorem~\ref{general-linear}, we can construct many functions with low $c$-differential uniformity from the known functions. Moreover, P$c$N and AP$c$N functions can be derived in our constructions. By employing inverse functions and Gold functions,  the following corollaries can be directly obtained by Lemmas \ref{c-inverse} and  and \ref{c-inverse-odd}.

\begin{cor}\label{L-I} Let $q=2^m$ with $m\geq3$ being a positive integer and $F(x,y)=(L(x),y^{-1}+L(x))$, where $L(x)$ is a linearized permutation polynomial over $\mathbb{F}_q$. Let $c=(c_1,c_2)\in\mathbb{F}_{q}^2\backslash\{(1,0)\}$ and denote $A=(c_1+c_2)(1+c_1+tc_2)+tc_2(1+c_1)$  and $B=1+c_1+tc_2$.
\begin{enumerate}
\item[{\rm(1)}] If $AB=0$,  then $F(x,y)$ is a P$c$N function;
\item[{\rm(2)}] If  $AB\ne0$, then $F(x,y)$ is differentially $(c,3)$-uniform when $\Tr_1^m(\frac{A}{B})\cdot\Tr^m_1(\frac{B}{A})=0$; otherwise, $F(x,y)$ is an AP$c$N function.
\end{enumerate}

\end{cor}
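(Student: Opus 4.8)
The plan is to read this off from Theorem~\ref{general-linear} directly, taking the inner permutation $h(y)$ to be the inverse function $y^{-1}=y^{q-2}$, and then to determine its $c$-differential uniformity from Lemma~\ref{c-inverse}. First I would check that the hypotheses of Theorem~\ref{general-linear} are satisfied: by assumption $L(x)$ is a linearized permutation polynomial, and the inverse function is a permutation of $\mathbb{F}_q$, so the theorem applies with $A=(c_1+c_2)(1+c_1+tc_2)+tc_2(1+c_1)$ and $B=1+c_1+tc_2$, the sign changes being automatic in characteristic $2$.

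When $AB=0$, Theorem~\ref{general-linear} immediately gives that $F(x,y)$ is P$c$N, which is statement~(1); nothing further is required. For statement~(2), assume $AB\ne0$. Then Theorem~\ref{general-linear} tells us that $F(x,y)$ is differentially $(c,\delta)$-uniform, where $\delta={}_{A/B}\Delta_{y^{-1}}$ is the $(A/B)$-differential uniformity of the inverse function. At this point I would invoke two facts already available from the proof of Theorem~\ref{general-linear}: that $A/B\ne1$, and, since $AB\ne0$ forces $A\ne0$, that $A/B\ne0$. Hence $A/B\in\mathbb{F}_q\setminus\{0,1\}$ is a legitimate argument for Lemma~\ref{c-inverse}.

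Applying Lemma~\ref{c-inverse} with $c:=A/B$ (so that $1/c=B/A$) yields the desired dichotomy. If $\Tr^m_1(A/B)=0$ or $\Tr^m_1(B/A)=0$, which in $\mathbb{F}_2$ is exactly the condition $\Tr^m_1(\frac{A}{B})\cdot\Tr^m_1(\frac{B}{A})=0$, then part~(3) of Lemma~\ref{c-inverse} gives $\delta=3$; otherwise both traces equal $1$, and part~(2) gives $\delta=2$, so $F$ is AP$c$N. Since the substantive work is done inside Theorem~\ref{general-linear} and Lemma~\ref{c-inverse}, there is no genuine obstacle here; the only points needing care are verifying that $A/B$ lies in the admissible range $\mathbb{F}_q\setminus\{0,1\}$ so that Lemma~\ref{c-inverse} can be invoked, and recasting the two trace conditions as the single product condition. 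The hypothesis $m\ge3$ serves only to guarantee that the inverse function's $c$-differential behaviour is as recorded in Lemma~\ref{c-inverse}.
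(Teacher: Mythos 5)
Your proposal is correct and follows essentially the same route as the paper, which states that this corollary is obtained directly by combining Theorem~\ref{general-linear} (specialized to characteristic $2$, where the signs in $A$ and $B$ become plus signs) with Lemma~\ref{c-inverse} applied to $c:=A/B$. Your added checks that $A/B\notin\{0,1\}$ and the translation of the two trace conditions into the product condition $\Tr_1^m(\frac{A}{B})\cdot\Tr^m_1(\frac{B}{A})=0$ are exactly the (implicit) verifications the paper relies on.
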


\begin{cor}\label{I-odd p} Let $q=p^m$ with $m$ being a positive integer and $p$ being an odd prime, $F(x,y)=(L(x),\frac{1}{y}+L(x))$, where $L(x)$ is a linearized permutation polynomial  over $\mathbb{F}_q$. Let $c=(c_1,c_2)\in\mathbb{F}_{q}^2\backslash\{(1,0)\}$. Denote $A=(c_2-c_1)(1-c_1+tc_2)-tc_2(1-c_1)$ and $B=1-c_1+tc_2$. Then$:$
\begin{enumerate}
\item[{\rm(1)}] If $AB=0$,  then $F(x,y)$ is a P$c$N function;
\item[{\rm(2)}] If  $AB\ne0$,
 when $\frac{A}{B}=4,4^{-1}$ or $\frac{A}{B}\left(\frac{A}{B}-4 \right) \in NSQ$ and $\frac{A}{B}(\frac{B}{A}-4 ) \in NSQ$, then $F(x,y)$ is an AP$c$N function;
when $\frac{A}{B} \neq 4,4^{-1},\frac{A}{B}\left(\frac{A}{B}-4 \right) \in SQ$, or $\frac{A}{B}(\frac{B}{A}-4 ) \in SQ$,  $F(x,y)$ is differentially $(c,3)$-uniform.
\end{enumerate}

\end{cor}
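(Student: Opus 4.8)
The plan is to read the statement off directly from Theorem~\ref{general-linear} combined with Lemma~\ref{c-inverse-odd}, since $F(x,y)=\left(L(x),\frac{1}{y}+L(x)\right)$ is precisely of the shape $\left(L(x),h(y)+L(x)\right)$ treated in the theorem, with $h(y)=y^{-1}$ the inverse function. The inverse function is a permutation of $\mathbb{F}_q$ and $L$ is a linearized permutation polynomial, so the hypotheses of Theorem~\ref{general-linear} are met and essentially all the structural work --- reducing the bivariate $c$-differential system to a single one-variable $\tfrac{A}{B}$-differential equation for $h$ --- is already done. What remains is to evaluate the $\tfrac{A}{B}$-differential uniformity $\delta={}_{A/B}\Delta_h$ of the inverse function and to repackage the square / non-square criteria of Lemma~\ref{c-inverse-odd} in terms of $A$ and $B$.

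First I would dispose of the case $AB=0$: here Theorem~\ref{general-linear} yields immediately that $F$ is P$c$N, which is part~(1). For $AB\ne0$, Theorem~\ref{general-linear} asserts that $F$ is differentially $(c,\delta)$-uniform with $\delta={}_{A/B}\Delta_h$. Before applying Lemma~\ref{c-inverse-odd} with the constant $c':=\frac{A}{B}$, I would record the two admissibility facts already available: $\frac{A}{B}\ne1$ (established inside the proof of Theorem~\ref{general-linear} via \eqref{tc1c2}) and $\frac{A}{B}\ne0$ (because $A\ne0$ in this case). Thus $c'\in\mathbb{F}_q\setminus\{0,1\}$ and parts~(2)--(3) of Lemma~\ref{c-inverse-odd} apply verbatim.

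The one genuinely computational step is translating the discriminant conditions. Writing $c'=\frac{A}{B}$, I would use the identities
\[
(c')^{2}-4c'=\frac{A}{B}\left(\frac{A}{B}-4\right),\qquad 1-4c'=\frac{A}{B}\left(\frac{B}{A}-4\right),
\]
the second following from $\frac{A}{B}\cdot\frac{B}{A}=1$. Feeding these into Lemma~\ref{c-inverse-odd}(2) shows that $\delta=2$ (so $F$ is AP$c$N) exactly when $\frac{A}{B}=4,4^{-1}$ or $\frac{A}{B}\left(\frac{A}{B}-4\right)\in NSQ$ and $\frac{A}{B}\left(\frac{B}{A}-4\right)\in NSQ$, while Lemma~\ref{c-inverse-odd}(3) gives $\delta=3$ in the complementary range; this is exactly part~(2).

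The main thing to watch --- rather than a deep obstacle --- is the bookkeeping of signs. Over odd characteristic the constants $\frac{A}{B}$ and $-\frac{A}{B}$ produce genuinely different discriminants $1-4\frac{A}{B}$ and $1+4\frac{A}{B}$, so one must verify that the constant $A$ in the statement is exactly the multiplier appearing in the reduced equation $h(y+a_2)-\frac{A}{B}h(y)=b_3$ coming from Theorem~\ref{general-linear}, and not its negative. With the matching sign convention fixed, the two identities above make the correspondence with Lemma~\ref{c-inverse-odd} exact, and the corollary then follows with no further computation.
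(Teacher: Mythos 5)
Your proposal matches the paper exactly: the authors give no separate proof of this corollary, stating only that it "can be directly obtained" from Theorem~\ref{general-linear} together with Lemma~\ref{c-inverse-odd}, which is precisely your route (reduce to $h(y+a_2)-\tfrac{A}{B}h(y)=b_3$ with $h(y)=y^{-1}$, check $\tfrac{A}{B}\notin\{0,1\}$, and rewrite $(c')^2-4c'$ and $1-4c'$ as $\tfrac{A}{B}(\tfrac{A}{B}-4)$ and $\tfrac{A}{B}(\tfrac{B}{A}-4)$). The one sign check you rightly flag does, in fact, fail as the paper is written: the corollary's $A=(c_2-c_1)(1-c_1+tc_2)-tc_2(1-c_1)$ is the \emph{negative} of the $A=(c_1-c_2)(1-c_1+tc_2)+tc_2(1-c_1)$ produced by Theorem~\ref{general-linear} (the even-characteristic Corollary~\ref{L-I} is consistent with the theorem, where the sign is invisible), so your argument actually establishes the statement with $A$ taken as in the theorem, and the discrepancy should be read as a typo in the corollary rather than a gap in your proof.
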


\begin{example} Let $q=2^4$ and $t=w^3$, where $w$ is a primitive element of $\mathbb{F}_q$. Then $f(x)=x^2+x+t$ is an irreducible polynomial over $\mathbb{F}_q[x]$. Let $F(x,y)=(x,y^{-1}+x)$ and $c=(c_1,c_2)\in\mathbb{F}_{2^4}^2\backslash\{(1,0)\}$.  Magma experiments show that when $(c_1+c_2+\frac{tc_2(1+c_1)}{1+c_1+tc_2})(1+c_1+tc_2)=0$ or $(c_1,c_2)=(0,0)$, $F(x,y)$ is  P$c$N; when  $\left(c_1+c_2+\frac{tc_2(1+c_1)}{1+c_1+tc_2}\right)(1+c_1+tc_2)\ne0$ and $(c_1,c_2)\ne0$, $F(x,y)$ is differentially $(c,3)$-uniform if $\Tr_1^m\left(c_1+c_2+\frac{tc_2(1+c_1)}{1+c_1+tc_2}\right)=0$ or $\Tr^m_1\Big(\Big(c_1+c_2+\frac{tc_2(1+c_1)}{1+c_1+tc_2}\Big)^{-1}\Big)=0$; otherwise, $F(x,y)$ is  AP$c$N. It is consistent with the result in Corollary~\textup{\ref{L-I}}.
\end{example}

\begin{example}Let $q=2^3$.  Then $f(x)=x^2+x+1$ is an irreducible polynomial over $\mathbb{F}_q[x]$. Let $F(x,y)=(x^2+x,y^{-1}+x^2+x)$ and $c=(c_1,c_2)\in\mathbb{F}_{2^3}^2\backslash\{(1,0)\}$.  Magma experiments show that when $(c_1+c_2+\frac{c_2(1+c_1)}{1+c_1+c_2})(1+c_1+c_2)=0$ or $(c_1,c_2)=0$, $F(x,y)$ is differentially $(c,2)$-uniform; otherwise, $F(x,y)$ is  $(c,6)$-uniform which  is consistent with the result in Remark~\textup{\ref{L-I-R}}.
\end{example}


%


\begin{thm}\label{L-G} Let  $F(x,y)=(L(x),y^{p^k+1}+\alpha y+L(x))$, where $L(x)$ is a linearized permutation polynomial, $k<m$ is a positive integer and $\alpha\in\mathbb{F}_q$. Let $c=(c_1,c_2)\in\mathbb{F}_q^2\backslash\{(1,0)\}$.
\begin{enumerate}
\item[{\rm(1)}] When $m\ne 2k$, if $\alpha= 0$ and  $\frac{1-c_1+tc_2}{tc_2^2+(1-c_1)c_2+(1-c_1)^2}\in\mathbb{F}_{p^{\gcd(m,k)}}$, then $_c\Delta_F=\gcd(p^k+1,p^m-1)$. Otherwise, we have $_c\Delta_F=p^{\gcd(m,k)}+1$.
\item[{\rm(2)}] When $m= 2k$, if $\alpha\ne0$ and  $\frac{1-c_1+tc_2}{tc_2^2+(1-c_1)c_2+(1-c_1)^2}\in\mathbb{F}_{p^k}$, then $_c\Delta_F=2$. Otherwise, we have $_c\Delta_F=p^k+1$.
\end{enumerate}
\end{thm}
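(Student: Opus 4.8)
The plan is to mimic the reduction used in the proof of Theorem~\ref{general-linear}, since $F$ again has the shape $(g(x),h(y)+g(x))$ with $g=L$ a linearized permutation and $h(y)=y^{p^k+1}+\alpha y$; the only genuinely new feature is that $h$ need no longer permute $\mathbb{F}_q$. Writing $B=1-c_1+tc_2$ and $D=tc_2^2+(1-c_1)c_2+(1-c_1)^2$ (nonzero by \eqref{tc1c2}), I would first reduce the system \eqref{first-equation} exactly as in \eqref{I-11}--\eqref{I-12}. When $B=0$ the first equation forces $h(y)$ to equal a constant, and the second then determines $x$ uniquely (the coefficient $1-c_1$ is nonzero, since $B=0$ together with $c_1=1$ would give $c_2=0$, which is excluded); hence $_c\Delta_F$ equals the largest number of $y$ with $h(y)=\mathrm{const}$. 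When $B\ne 0$ I would eliminate $L(x)$ as in \eqref{I-21} to arrive, for free $b_3\in\mathbb{F}_q$, at $h(y+a_2)-\lambda h(y)=b_3$ with $\lambda=A/B$, where again $x$ is determined uniquely by $y$; note $\lambda\ne 1$ because $A=B-D$ and $D\ne0$.

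The heart of the argument is counting the solutions of $h(y+a_2)-\lambda h(y)=b_3$. Expanding and dividing by $1-\lambda$ (put $\mu=(1-\lambda)^{-1}=B/D$), this reads $y^{p^k+1}+\mu a_2\,y^{p^k}+(\mu a_2^{p^k}+\alpha)y+(\mathrm{const})=0$, and the substitution $y=Z-\mu a_2$ kills the $y^{p^k}$ term, producing the Bluher normal form $Z^{p^k+1}+a'Z+b''=0$ with $a'=(\mu-\mu^{p^k})a_2^{p^k}+\alpha$ and $b''$ still ranging over $\mathbb{F}_q$. The key observation is that $\mu=B/D=\frac{1-c_1+tc_2}{tc_2^2+(1-c_1)c_2+(1-c_1)^2}$ is exactly the quantity in the statement, so the hypothesis $\mu\in\mathbb{F}_{p^{\gcd(m,k)}}$ is the same as $\mu^{p^k}=\mu$. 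I then split according to whether $a'=0$ is possible: if $a'=0$ the equation is $Z^{p^k+1}=\mathrm{const}$, with at most $\gcd(p^k+1,p^m-1)$ roots, attained when the constant is a nonzero $(p^k+1)$-th power; if $a'\ne0$, normalizing the linear coefficient reduces to $W^{p^k+1}+W+b''=0$, to which Lemma~\ref{mainlem} applies. Whether $a'$ can be made $0$, is forced to be $0$, or is forced to equal the nonzero constant $\alpha$ is governed precisely by the alternatives $\mu\in\mathbb{F}_{p^{\gcd(m,k)}}$ (whence $\mu-\mu^{p^k}=0$ and $a'\equiv\alpha$) versus $\mu\notin\mathbb{F}_{p^{\gcd(m,k)}}$ (whence $a_2\mapsto a'$ is a bijection of $\mathbb{F}_q$, attaining every value).

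With $d=\gcd(m,k)$ and $r=m/d$, the two parts now separate according to $r$. I would first record the arithmetic facts that $r=2$ is equivalent to $m=2k$ (because $k<m$ and $d\mid k$ force $k=d$ once $m=2d$), and that $\gcd(p^k+1,p^m-1)=p^d+1$ when $r$ is even while $\gcd(p^k+1,p^m-1)\le 2$ when $r$ is odd; in particular $\gcd(p^k+1,p^m-1)\le p^d+1$ always, with the value $p^k+1$ precisely when $m=2k$. For Part~(1), $m\ne2k$ gives $r\ge3$, so Lemma~\ref{mainlem} guarantees a $b''$ for which $W^{p^k+1}+W+b''=0$ has $p^d+1$ roots; hence whenever $a'\ne0$ is reachable (that is, $\mu\notin\mathbb{F}_{p^d}$, or $\mu\in\mathbb{F}_{p^d}$ with $\alpha\ne0$) the maximum is $p^d+1$, while the sole remaining case $\alpha=0$ and $\mu\in\mathbb{F}_{p^d}$ forces $a'\equiv0$ and hence $\gcd(p^k+1,p^m-1)$. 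For Part~(2), $m=2k$ gives $r=2$, so by the remark following Lemma~\ref{mainlem} the form $W^{p^k+1}+W+b''$ has at most $2$ roots; thus $p^k+1$ solutions can arise only through the regime $a'=0$, which is available unless $a'\equiv\alpha\ne0$, i.e.\ unless $\alpha\ne0$ and $\mu\in\mathbb{F}_{p^k}$, in which case only the regime $a'\ne0$ occurs and the uniformity drops to $2$.

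The main obstacle I anticipate is the bookkeeping of attainability rather than any single hard estimate: in each branch I must exhibit concrete $a_2,b_3$ realizing the claimed maximum (a nonzero $(p^k+1)$-th power as the constant when $a'=0$; the Bluher-optimal $b''$ from Lemma~\ref{mainlem} when $a'\ne0$ and $r\ge3$; and a value with exactly two preimages, e.g.\ a suitable fiber of $W^{p^k+1}+W$, when $r=2$), and simultaneously verify the matching upper bound via $\gcd(p^k+1,p^m-1)\le p^d+1$. Care is also needed to confirm that the degenerate cases $c=(0,0)$ and $B=0$ fold correctly into the $\mu\in\mathbb{F}_{p^d}$ branch, where $\mu=1$ and $\mu=0$ respectively, so that they are consistent with the stated values.
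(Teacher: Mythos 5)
Your proposal is correct and follows essentially the same route as the paper: both reduce the system to a single Bluher-type equation $A_1y^{p^k+1}+A_2y^{p^k}+A_3y+A_4=0$ in $y$ (your $\lambda$-derivative of $h$ with $\mu=B/D$ is exactly the paper's Case II substitution, and your $a'=(\mu-\mu^{p^k})a_2^{p^k}+\alpha$ is precisely the paper's criterion $(A_2/A_1)^{p^k}=A_3/A_1$), then invoke Lemma~\ref{mainlem} according to whether the linear coefficient can be made zero; your folding of $B=0$ into the $\mu=0$ branch is a cosmetic repackaging of the paper's separate Case I. The only slip is the parenthetical claim that $\gcd(p^k+1,p^m-1)=p^k+1$ ``precisely when $m=2k$'' (it also holds, e.g., for $m=4k$), but this is never used --- only the correct bound $\gcd(p^k+1,p^m-1)\le p^{\gcd(m,k)}+1$ enters the argument.
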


\begin{proof}
By Definition \ref{def}, it is sufficient to solve the system of equations
\begin{numcases}{}
(1-c_1+t c_2)L(x)+t c_2y^{p^k+1}+tc_2\alpha y=b_1,\label{III-11}\\
(1-c_1+c_2)y^{p^k+1}+a_2y^{p^k}+(a_2^{p^k}+\alpha(1-c_1+c_2))y+(1-c_1)L(x)=b_2,\label{III-12}
\end{numcases}
where $a_2$, $b_1$, $b_2\in\mathbb{F}_q$. Note that $1-c_1+t c_2$ and $1-c_1$ cannot be zero simultaneously. Otherwise, we can infer that $(c_1,c_2)=(1,0)$, which contradicts the assumption. Next, we discuss the above system of equations by splitting the analysis into two cases.

\noindent {\bf Case I:} $1-c_1+t c_2=0$.  Then there are at most $\gcd(p^k+1,p^m-1)$ solutions $y\in\mathbb{F}_q$ satisfying~\eqref{III-11} when  $\alpha=0$ and at most $p^{\gcd(m,k)}+1$ solutions when $\alpha\ne0$ by Lemma \ref{mainlem}. Further, for each~$y$,  Equation~\eqref{III-12} has exactly one solution $x\in\mathbb{F}_q$ due to $L(x)$ being a permutation polynomial. Since $b_1$ runs over $\mathbb{F}_q$, one has $_c\Delta_F=\gcd(p^k+1,p^m-1)$ for $\alpha=0$. When $\alpha\ne0$, we can see that if $m=2k$, then $_c\Delta_F=2$ again by Lemma ~\ref{mainlem} and otherwise, $_c\Delta_F=p^{\gcd(m,k)}+1$.

\noindent {\bf Case II:} $1-c_1+t c_2\ne0$. For this case, $L(x)=\frac{b_1-t c_2y^{p^k+1}-tc_2\alpha y}{1-c_1+t c_2}$ by \eqref{III-11}. Substituting $L(x)$ into \eqref{III-12}, one gets
\begin{eqnarray}\label{A-G}
A_1y^{p^k+1}+A_2y^{p^k}+A_3y+A_4=0,
\end{eqnarray}
where $A_1=tc_2^2+(1-c_1)c_2+(1-c_1)^2$, $A_2=(1-c_1+tc_2)a_2$, $A_3=(1-c_1+tc_2)a_2^{p^k}+\alpha A_1$  and $A_4=b_1(1-c_1)-b_2(1-c_1+tc_2)$. Note that $A_1\ne0$ from \eqref{tc1c2}.

When $\alpha=0$, Equation~\eqref{A-G} has at most $\gcd(p^k+1,p^m-1)$ solutions in $\mathbb{F}_q$ if $(\frac{A_2}{A_1})^{p^k}=\frac{A_3}{A_1}$. When $\alpha\ne0$, Equation~\eqref{A-G} has at most $p^{\gcd(m,k)}+1$ solutions if $m\ne 2k$ by Lemma \ref{mainlem} and also at most $p^k+1$ solutions if $m=2k$  which is possible by choosing $a_2=0$ and $b_1$, $b_2$, properly.

Similarly, when $\alpha\ne0$,  observe that if $\frac{1-c_1+tc_2}{tc_2^2+(1-c_1)c_2+(1-c_1)^2}\notin\mathbb{F}_{p^{\gcd(m,k)}}$, then there always exists $a_2\in\mathbb{F}_q$ such that $(\frac{A_2}{A_1})^{p^k}=\frac{A_3}{A_1}$. Therefore, by Lemma~\ref{mainlem}, one has that the above equation has at most $p^{\gcd(m,k)}+1$ solutions for the case $m\ne 2k$. If $m=2k$, it has at most $p^k+1$ solutions when $\frac{1-c_1+tc_2}{tc_2^2+(1-c_1)c_2+(1-c_1)^2}\notin\mathbb{F}_{p^k}$, otherwise, it has at most two solutions.

The result follows by combining Cases I and II, which completes the proof.
\end{proof}

\begin{remark} It is well known~\textup{\cite{EFRS}} that $\gcd(2^k+1,2^m-1)=\frac{2^{\gcd(2k,m)}-1}{2^{\gcd(k,m)}-1}$ and when $p>2$ and $\frac{m}{\operatorname{gcd}(m, k)}$ is odd, $\operatorname{gcd}\left(p^{k}+1, p^{m}-1\right)=2$. Therefore, one can see from Theorem~\textup{\ref{L-G}} that when $m\ne 2k$,  $\frac{m}{\operatorname{gcd}(m, k)}$ is odd, if $\alpha= 0$ and  $\frac{1-c_1+tc_2}{tc_2^2+(1-c_1)c_2+(1-c_1)^2}\in\mathbb{F}_{p^{\gcd(m,k)}}$, then $F(x,y)$ is P$c$N when $p=2$ and AP$c$N when $p>2$.
\end{remark}

\begin{example} Let $q=2^4$ and $t=w^3$, where $w$ is a primitive element of $\mathbb{F}_{2^4}$. Then $f(x)=x^2+x+t$ is a irreducible polynomial over $\mathbb{F}_{2^4}[x]$. Let $F(x,y)=(x,y^5+\alpha y+x)$ and $c=(c_1,c_2)\in\mathbb{F}_{2^4}^2\backslash\{(1,0)\}$.  Magma experiments show that when $\alpha\ne0$ and  $\frac{1+c_1+tc_2}{tc_2^2+(1+c_1)c_2+(1+c_1)^2}\in\mathbb{F}_{2^2}$, $F(x,y)$ is AP$c$N;  otherwise, $F(x,y)$ is  differential $(c,5)$-uniform which is consistent with the result in Theorem~\textup{\ref{L-G}}.
\end{example}

\begin{example} Let $q=3^3$ and $t=w^2$, where $w$ is a primitive element of $\mathbb{F}_{3^3}$. Then $f(x)=x^2+x+t$ is a irreducible polynomial over $\mathbb{F}_{3^3}[x]$. Let $F(x,y)=(x,y^{10}+\alpha y+x)$ and $c=(c_1,c_2)\in\mathbb{F}_{3^3}^2\backslash\{(1,0)\}$.  Magma experiments show that when $\alpha=0$ and $\frac{1-c_1+tc_2}{tc_2^2+(1-c_1)c_2+(1-c_1)^2}\in\mathbb{F}_{3}$, $F(x,y)$ is AP$c$N;  otherwise, $F(x,y)$ is  differential $(c,4)$-uniform which is also consistent with the result in Theorem~\textup{\ref{L-G}}.
\end{example}

In the following, we present another class of bivariate functions $F(x,y)$ which is different from  that of Proposition~\ref{p1}.

\begin{thm}
Let  $\alpha\in\mathbb{F}_q^*$, $F(x,y)=(x+y,x^{p^i}y+\alpha xy^{p^j})$ and  $c=(c_1,c_2)\in\mathbb{F}_{q}^2\backslash\{(1,0)\}$. Denote 
$A=\big\{(c_1,c_2)\in\mathbb{F}_{q}^2\backslash\{(1,0)\}: c_2 \ne 0\;\, \mbox{and}\, \; \Tr^m_1\left(\frac{1-c_1}{tc_2}\right)=\Tr^m_1\left(\frac{(1-c_1+c_2)(c_1-1)}{tc_2}-c_2\right)=0\big\}$. Then$:$
\begin{enumerate}
\item[{\rm(1)}] If $\alpha= -1$ and $(i,j)=(0,1)$, then $_c\Delta_F\leq p+1$ for $c\in\{(c_1,0):c_1\in\mathbb{F}_q\backslash\{1\}\}$;  $_c\Delta_F\leq q+p-1$ for $c\in A$ and  $_c\Delta_F\leq 2p$, otherwise;
\item[{\rm(2)}] If $\alpha= -1$ and $(i,j)=(0,m-1)$, then $_c\Delta_F=q+p-1$ for $c\in A$, and $_c\Delta_F\leq 2p$, otherwise;
\item[{\rm(3)}] If $\alpha\ne -1$ and $(i,j)=(1,1)$ or $(i,j)=(m-1,m-1)$, then $_c\Delta_F\leq p+1$ for $c\in\{(c_1,0):c_1\in\mathbb{F}_q\backslash\{1\}\}$, and $_c\Delta_F\leq p^2+p$, otherwise.
\end{enumerate}
\end{thm}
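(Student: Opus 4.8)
The plan is to specialize the defining system~\eqref{c-def} of Definition~\ref{def} to $G(x,y)=x+y$ and $H(x,y)=x^{p^i}y+\alpha xy^{p^j}$, and then, in each of the three cases, to collapse it to a single univariate equation whose degree bounds $_c\Delta_F$. Because $G$ is additive, the first equation becomes $(1-c_1)(x+y)+tc_2H(x,y)=b_1-a_1-a_2$ and the second becomes $H(x+a_1,y+a_2)-(c_1-c_2)H(x,y)-c_2(x+y)=b_2$. The natural split is $c_2=0$ versus $c_2\neq0$, and throughout one uses $t\neq0$, which follows from the standing hypotheses on $t$, together with the nonvanishing $tc_2^2+(1-c_1)c_2+(1-c_1)^2\neq0$ recorded in~\eqref{tc1c2}.

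First I would dispatch $c=(c_1,0)$ with $c_1\neq1$. Here the first equation is linear and fixes $s:=x+y=(b_1-a_1-a_2)/(1-c_1)$; substituting $y=s-x$ turns the second equation into a polynomial in $x$ alone. For case~(1) a direct expansion gives leading term $(1-c_1)x^{p+1}\neq0$ with no higher power surviving, so the degree is $p+1$ and $_c\Delta_F\le p+1$; for case~(3) the same computation produces a leading coefficient proportional to $(1+\alpha)(1-c_1)$, nonzero exactly because $\alpha\neq-1$, again yielding degree $p+1$. For case~(2) the substitution creates the inverse-Frobenius exponent $1+p^{m-1}$; here I would raise the equation to the $p$-th power and reduce via $x^{q}=x$, sending $x^{1+p^{m-1}}\mapsto x^{p+1}$ and $x^{2}\mapsto x^{2p}$, to obtain a polynomial of degree $2p<q$ with nonzero leading coefficient, hence $_c\Delta_F\le 2p$. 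This is why cases~(1) and~(3) record the sharper value $p+1$ for $c_2=0$, whereas case~(2) is absorbed into the generic bound $2p$.

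The core of the argument is $c_2\neq0$. Since $tc_2\neq0$, the first equation writes $H(x,y)$ as an affine function $L_1(s)$ of $s=x+y$, and feeding this into the second writes $H(x+a_1,y+a_2)$ as a second affine function $L_2(s)$. The decisive simplification is that in the difference $H(x+a_1,y+a_2)-H(x,y)$ all genuine cross terms cancel, leaving a relation (C) that is affine in $s$ and of low degree in $x,y$ separately; crucially it is of degree one in $x$ when $i=0$ (cases~(1),(2)) and of degree $p$ in $x$ when $i\neq0$ (case~(3)). In cases~(1),(2) I would use (C) to solve for $x$ and substitute into $H(x,y)=L_1(s)$: for case~(1) this is immediately a degree-$2p$ equation in $y$, while for case~(2) a Frobenius normalization of the exponent $p^{m-1}$ is applied first, and in both situations $_c\Delta_F\le 2p$ results; in characteristic two this quartic count can be refined through Lemma~\ref{quartic equation}. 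In case~(3), since (C) is itself of degree $p$ in $x$, one cannot solve linearly; instead a B\'ezout/degree count on the pair $H(x,y)=L_1(s)$ (total degree $p+1$) and (C) (total degree $p$) gives at most $(p+1)p=p^2+p$ common zeros, so $_c\Delta_F\le p^2+p$.

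Finally I would isolate the degenerate configurations in cases~(1),(2), namely those in which the coefficient of $x$ in (C) vanishes so that the elimination collapses and a whole coset of $(x,y)$ becomes feasible. This vanishing is governed by the Artin--Schreier-type (in odd characteristic, quadratic) solvability criterion of Lemma~\ref{2-equation}: the additive map $a_2\mapsto a_2-a_2^{p^j}$ attains the required value precisely when a trace vanishes, which together with a second solvability condition reproduces exactly the two trace equations defining $A$; the collapsed system then carries a line of $q$ solutions, while a secondary branch supplies the remaining $p-1$, yielding $q+p-1$. For case~(2) I would additionally exhibit explicit $a,b$ realizing all $q+p-1$ solutions, promoting the bound to an equality. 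The main obstacle is exactly this degenerate analysis: matching the two trace hypotheses in the definition of $A$ to the solvability conditions of Lemma~\ref{2-equation}, and verifying that the collapsed system contributes precisely $q+p-1$ rather than fewer solutions, as opposed to the comparatively routine degree counts that settle the generic regime.
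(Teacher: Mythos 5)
Your overall strategy coincides with the paper's: for $c_2=0$ you solve the linear first equation for $x+y$ and reduce to a univariate polynomial of degree $p+1$ (or $2p$ after taking a $p$-th power in case (2)), and for $c_2\ne 0$ you eliminate the bilinear term $x^{p^i}y+\alpha xy^{p^j}$ between the two equations to obtain an additive relation in $x$ and $y$ --- this is exactly the paper's Equation~\eqref{II-42}. Your account of where the set $A$ comes from is also the right one: the two trace conditions are the Artin--Schreier solvability of the vanishing of the $x$-coefficient of that additive relation (an equation of the form $a_2^{p^j}-a_2=\frac{(1-c_1+c_2)(c_1-1)}{tc_2}-c_2$) together with the solvability of $tc_2(y-y^{p^j})+1-c_1=0$, and the count $q+p-1$ arises as one exceptional $y$ carrying a full line of $q$ values of $x$ plus at most $p-1$ ordinary $y$'s. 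Since the paper only writes out case (3) in detail, your sketch of cases (1) and (2) is, if anything, more explicit than the source.

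The genuine gap is in case (3) with $c_2\ne0$. You bound the common zeros of the degree-$(p+1)$ curve $H(x,y)=L_1(s)$ and the degree-$p$ relation (C) by B\'ezout as $(p+1)p=p^2+p$, but B\'ezout requires that the two curves share no common component, and you never verify this. That is not a formality here: the entire content of cases (1) and (2) is that for $c\in A$ the analogous system \emph{does} degenerate and acquires $q+p-1\gg p^2+p$ solutions, so non-degeneracy in case (3) is precisely the point at issue. The paper spends essentially all of its effort on this verification: it first notes from \eqref{tc1c2} that $a_2$ and the coefficient $B_1$ of $x$ in \eqref{II-42} cannot vanish simultaneously, and then works through the subcases $a_2=0$, $a_2\ne0$ with $B_1=0$, and $a_2\ne0$ with $B_1\ne0$ (including the delicate situation in which the coefficient $tc_2\alpha y^p-\frac{B_1tc_2}{a_2}y+1-c_1$ of $x$ in \eqref{II-41} has a root $y_1$, which is shown to contribute at most $p$ further solutions), in each branch exhibiting an explicit univariate equation of degree at most $p^2+p$ with nonvanishing leading coefficient. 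Until you either carry out an equivalent elimination or prove directly that the two curves have no common component over $\overline{\mathbb{F}}_q$, the bound $p^2+p$ in case (3) is unsupported.
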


\begin{proof}

 We only give the proof for the case $(i,j)=(1,1)$, since the  other cases can be similarly proved. Let $c=(c_1,c_2)\in\mathbb{F}_q^2\backslash\{(1,0)\}$. It is sufficient to solve the following system of equations
\begin{numcases}{}
t c_2(x^py+\alpha xy^p)+(1-c_1)x+(1-c_1)y=b_1,\label{II-31}\\
(1+c_2-c_1)(x^py+\alpha xy^p)+a_2x^p+\alpha a_1y^p+(\alpha a_2^p-c_2)x+(a_1^p-c_2)y=b_2,\label{II-32}
\end{numcases}
where $(a_1,a_2)$, $(b_1,b_2)\in\mathbb{F}_q^2$. When $c_2=0$, then we have $c_1\ne0$ due to $(c_1,c_2)\neq(1,0)$ and $x=\frac{b_1}{1-c_1}-y$ from \eqref{II-31}. Replacing it into  \eqref{II-32}, we obtain an equation whose degree is actually $p+1$ due to $\alpha\ne-1$. Thus, we have $_c\Delta_F\leq p+1$. Next, we  assume that $c_2\ne0$. Multiplying Equation~\eqref{II-31} by $-\frac{1+c_2-c_1}{tc_2}$ and then adding it to Equation~\eqref{II-32} gives us
\begin{eqnarray}\label{II-42}
a_2x^p+B_1x+\alpha a_1y^p+B_2y=b_3,
\end{eqnarray}
where $B_1=\alpha a_2^p-c_2-\frac{(1+c_2-c_1)(1-c_1)}{tc_2}$, $B_2=a_1^p-c_2-\frac{(1+c_2-c_1)(1-c_1)}{tc_2}$ and $b_3=b_2-\frac{b_1(1+c_2-c_1)}{tc_2}$.
By  \eqref{tc1c2}, we can see that $a_2$ and $B_1$ cannot be zero at the same time.

\noindent {\bf Case I:} $a_2=0$. In this case, we have $B_1\ne0$ and $x=\frac{b_3-B_2y-\alpha a_1y^p}{B_1}$ from \eqref{II-42}. Substituting it into \eqref{II-31}, we get an equation in $y$ only, whose  highest degree is  $p^2+1$ when $a_1\ne0$. When $a_1=0$, then we have $B_1=B_2$ and $x=\frac{b_3}{B_1}-y$ from Equation~\eqref{II-42}. Combining this expression with Equation~\eqref{II-31} gives an equation about $y$ whose highest degree is  $p+1$ due to $\alpha\ne-1$. Thus, there is at most $ p^2+1$ solutions for Equations~\eqref{II-31}-\eqref{II-32}.

\noindent {\bf Case II:} $a_2\ne0$. If $B_1=0$, then $x^p=\frac{b_3-B_2y-\alpha a_1y^p}{a_2}$. Similarly as in Case I,  by taking $p$-th powers on both sides of \eqref{II-31} and then eliminating $x^p$, we can derive an equation in $y$ only, whose highest degree can reach   $p^2+p$. Thus, Equations~\eqref{II-31}-\eqref{II-32} has at most $ p^2+p$ solutions. If $B_1\ne0$, then multiplying Equation~\eqref{II-42} by $-\frac{tc_2}{a_2}y$ and adding it to \eqref{II-31} gives
\begin{eqnarray}
\label{II-41}
(tc_2\alpha y^p-\frac{B_1tc_2}{a_2}y+1-c_1)x-\frac{\alpha tc_2a_1}{a_2}y^{p+1}-\frac{B_2tc_2}{a_2}y^2+(1-c_1+\frac{b_3tc_2}{a_2})y=b_1.
\end{eqnarray}

{\bf Subcase (II-1):} If there exists $y_1$ such that $tc_2\alpha y_1^p-\frac{B_1tc_2}{a_2}y_1+1-c_1=0$, then \eqref{II-41} has solutions if and only if $b_1=-\frac{\alpha tc_2a_1}{a_2}y_1^{p+1}-\frac{B_2tc_2}{a_2}y_1^2+(1-c_1+\frac{b_3tc_2}{a_2})y_1$ and for such $y_1$, \eqref{II-31} has at most $p$ solutions on $x$.

{\bf Subcase (II-2):} If $y$ satisfies $tc_2\alpha y^p-\frac{B_1tc_2}{a_2}y+1-c_1\ne0$, then $x$ can be uniquely expressed by $y$ from \eqref{II-41}. Replacing $x$ by $y$ in \eqref{II-42}, we can obtain a equation with the highest degree being $p^2+p$. Thus, \eqref{II-31}-\eqref{II-32} has at most $ p^2+p$ solutions in this case.

When Subcases (II-1) and (II-2) happen simultaneously. Then from Subcase (II-2) we have
\begin{eqnarray*}
x&=&\frac{\frac{\alpha tc_2a_1}{a_2}y^{p+1}+\frac{B_2tc_2}{a_2}y^2-(1-c_1+\frac{b_3tc_2}{a_2})y+b_1}{tc_2\alpha y^p-\frac{B_1tc_2}{a_2}y+1-c_1}\\
&=&\frac{\frac{\alpha tc_2a_1}{a_2}(y^{p+1}-y_1^{p+1})+\frac{B_2tc_2}{a_2}(y^2-y_1^2)-(1-c_1+\frac{b_3tc_2}{a_2})(y-y_1)}{tc_2\alpha (y^p-y_1^p)-\frac{B_1tc_2}{a_2}(y-y_1)}.
\end{eqnarray*}
Dividing by $y-y_1$ on both sides of the above equation and then substituting it into \eqref{II-42} derives  a equation with the highest degree being $p^2$.  Therefore, one can see that \eqref{II-31}-\eqref{II-32} has also at most $ p^2+p$ solutions when Subcases (II-1) and (II-2) happen at the same time.
Based on the above analysis, one can conclude that $_c\Delta_F\leq p^2+p$. This completes the proof.
 \end{proof}

\begin{example}Let $F(x,y)=(x+y,x^{p^i}y+\alpha xy^{p^j})$ and $c=(c_1,c_2)\in\mathbb{F}_{2^4}^2\backslash\{(1,0)\}$.  Magma experiments show that
\begin{enumerate}
\item[{\rm(1)}]
If $\alpha=1$ and $(i,j)=(0,1)$, then $_c\Delta_F=3$ for $c\in\mathbb{F}_{2^4}$; $_c\Delta_F=17$ for those $c=(c_1,c_2)$ which satisfies  $c_2 \ne 0$ and $\Tr^4_1\big(\frac{1+c_1}{tc_2}\big)=\Tr^4_1\big(\frac{(1+c_1+c_2)(c_1+1)}{tc_2}+c_2\big)=0$. Otherwise, $_c\Delta_F=4$.

\item[{\rm(2)}] If $\alpha=1$ and $(i,j)=(0,3)$, then $_c\Delta_F=17$ if  $c=(c_1,c_2)$ satisfies  $c_2 \ne 0$ and $\Tr^4_1\big(\frac{1+c_1}{tc_2}\big)=\Tr^4_1\big(\frac{(1+c_1+c_2)(c_1+1)}{tc_2}+c_2\big)=0$ and otherwise, $_c\Delta_F=4$.

\item[{\rm(3)}] If $\alpha= w$ and $(i,j)=(1,1)$ or $(i,j)=(3,3)$, where $w$ is a primitive element in $\mathbb{F}_{2^4}$, then $_c\Delta_F=3$ for $c\in \mathbb{F}_{2^4}$ and otherwise, $_c\Delta_F=6$.
\end{enumerate}
\end{example}

\begin{example}Let $F(x,y)=(x+y,x^{p^i}y+\alpha xy^{p^j})$ and $c=(c_1,c_2)\in\mathbb{F}_{3^3}^2\backslash\{(1,0)\}$.  Magma experiments show that
\begin{enumerate}
\item[{\rm(1)}]
If $\alpha=-1$ and $(i,j)=(0,1)$ or $(i,j)=(0,2)$, then $_c\Delta_F=4$ for $c\in\mathbb{F}_{3^3}$; $_c\Delta_F=29$ for those $c=(c_1,c_2)$ which satisfies  $c_2 \ne 0$ and $\Tr^3_1\big(\frac{1-c_1}{tc_2}\big)=\Tr^3_1\big(\frac{(1-c_1+c_2)(c_1-1)}{tc_2}-c_2\big)=0$. Otherwise, $_c\Delta_F=6$.

\item[{\rm(2)}] If $\alpha= w$ and $(i,j)=(1,1)$ or $(i,j)=(2,2)$, , where $w$ is a primitive element in $\mathbb{F}_{3^3}$, then $_c\Delta_F=4$ for $c\in \mathbb{F}_{3^3}\backslash\{1\}$ and otherwise, $_c\Delta_F=12$.
\end{enumerate}
\end{example}

If  $F$ be a function from $\mathbb{F}_{q^2}$ to $\mathbb{F}_q\times\mathbb{F}_q$, by a similar process as before, we can also derive the expression of the $c$-differential equation of $F$.  Let  $n=2m$. In the following, we present another class of functions that have low $c$-differential uniformity by using the inverse function.

\begin{thm}\label{inv} Let $H(x)=\Tr^n_m\left(\frac{\gamma}{x}\right)$, where $\gamma\notin\mathbb{F}_{q}$. Let $F(x)=\left(\Tr^n_m(x),\, H(x)\right)$ and $c=(c_1,c_2)\in\mathbb{F}_q^2\backslash\{(1,0)\}$. Then
 $_{c}\Delta_F\leq6$. More precisely, if $c=0$, then  $F$ is AP$c$N; if  $c\in\{(c_1,c_2):c_1=1 \;{\rm or}\; c_2=0\;{\rm or}\; (1-c_1)(c_1-c_2)=tc_2^2\}$, then $_{c}\Delta_F\leq4$ and otherwise, $_{c}\Delta_F\leq6$.
\end{thm}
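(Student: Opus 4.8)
The plan is to translate Definition~\ref{def} into the present single-variable setting and then reduce the count to the intersection of two plane curves. Write $\overline{x}=x^{q}$ for the Galois conjugate. Since $G=\Tr^n_m$ is additive, the first equation of the system~\eqref{c-def} specialized to this $F$ collapses to the linear relation
\[(1-c_1)\,\Tr^n_m(x)+tc_2\,\Tr^n_m\!\Big(\frac{\gamma}{x}\Big)=b_1-\Tr^n_m(a)=:b_1',\]
while the second reads $\Tr^n_m\!\big(\tfrac{\gamma}{x+a}\big)-(c_1-c_2)\Tr^n_m(\tfrac{\gamma}{x})-c_2\Tr^n_m(x)=b_2$. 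Abbreviating $u=\Tr^n_m(x)$ and $v=\Tr^n_m(\gamma/x)$, the first equation is just $(1-c_1)u+tc_2 v=b_1'$.

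First I would dispose of $c_2=0$ (which contains $c=0$). Then $c_1\neq1$, the first equation fixes $u=b_1'/(1-c_1)$, so $\overline{x}=u-x$; substituting this into the second equation and clearing the denominators $x(u-x)$ and $(x+a)(\overline{x}+\overline{a})$ yields a single polynomial in $x$ over $\mathbb{F}_{q^2}$ of degree at most $4$, whence $_c\Delta_F\le4$. For $c=(0,0)$ the same substitution leaves only $\Tr^n_m(\gamma/(x+a))=b_2$ together with $u$ fixed, and after clearing denominators one gets a quadratic in $x$; this gives $_c\Delta_F\le2$, and a short argument (some fibre of $F$ has size exactly $2$, equivalently $F$ is not a bijection) upgrades this to AP$c$N.

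Now suppose $c_2\neq0$. Solving the first equation for $v$ gives $v=\mu_0+\mu_1 u$ with $\mu_1=-\frac{1-c_1}{tc_2}$, and feeding this into the second equation expresses the remaining trace as an affine function of $u$ as well, $\Tr^n_m(\gamma/(x+a))=\alpha_0+\alpha_1 u$ with $\alpha_1=(c_1-c_2)\mu_1+c_2=\frac{tc_2^{2}-(1-c_1)(c_1-c_2)}{tc_2}$. Clearing denominators in these two identities produces two cubic equations $A'(x,\overline x)=0$ and $B'(x,\overline x)=0$, whose cubic parts are $\mu_1\,x\overline x(x+\overline x)$ and $\alpha_1\,x\overline x(x+\overline x)$, respectively. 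The key observation is that these leading forms are proportional: hence, when $\mu_1\alpha_1\neq0$, the two cubics meet the line at infinity in the \emph{same} three points $[1:0:0],\,[0:1:0],\,[1:-1:0]$, and by B\'ezout their affine common zeros number at most $3\cdot3-3=6$. Since every genuine solution $x$ gives an affine point $(x,x^{q})$ on both curves, $_c\Delta_F\le6$. If instead $\mu_1=0$ (i.e.\ $c_1=1$) or $\alpha_1=0$ (i.e.\ $(1-c_1)(c_1-c_2)=tc_2^{2}$), exactly one of $A',B'$ degenerates to a conic; the cubic and the conic now share only the two points $[1:0:0],\,[0:1:0]$ at infinity, so B\'ezout gives at most $3\cdot2-2=4$ affine common zeros, i.e.\ $_c\Delta_F\le4$. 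Note that $c_1=1$ forces $\alpha_1=c_2\neq0$, so the two degeneracies never occur simultaneously; this recovers exactly the three regimes of the statement.

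The main obstacle is the legitimacy of the B\'ezout count: one must verify that $A'$ and $B'$ (equivalently $A'$ and the conic $\alpha_1 A'-\mu_1 B'$) share no common component, so that the intersection is $0$-dimensional. I expect this to follow from $\gamma\notin\mathbb{F}_q$ (so $\overline\gamma\neq\gamma$) together with the nonvanishing~\eqref{tc1c2} of $tc_2^{2}+(1-c_1)c_2+(1-c_1)^2$, which guards the relevant leading coefficients, with the borderline value $a=0$ treated directly (there the system becomes two \emph{linear} relations in $u,v$, and the count drops to $\le2$). The remaining bookkeeping — confirming that the excluded points $x\in\{0,-a\}$ introduced by clearing denominators do not inflate the count, and that each shared point at infinity contributes intersection multiplicity at least one — is routine and can only decrease the number of admissible solutions, so it does not threaten the stated upper bounds.
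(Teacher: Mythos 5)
Your route is genuinely different from the paper's and, in outline, sound. The paper proceeds by brute elimination: in the main case $c_2\ne 0$, $A_1A_2\ne 0$ it solves one combination of the two cleared equations for $\overline{x}$ as a rational function $\frac{-E_1x^3+E_2x^2+E_3x}{D_1x^2+D_2x-D_3}$ of $x$, substitutes back, and exhibits an explicit degree-$7$ polynomial \eqref{Tr-13} whose constant term vanishes, giving at most $6$ roots in $\mathbb{F}_{q^2}^*$; this costs a page of coefficient bookkeeping ($L_1,\dots,L_5$) plus a separate discussion of solutions where the denominator $D_1x^2+D_2x-D_3$ vanishes, and of $x\in\{0,-a\}$. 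Your B\'ezout count replaces all of that with the single observation that both cleared equations are cubics in $(x,\overline{x})$ with proportional leading forms $x\overline{x}(x+\overline{x})$, so they share the three points $[1:0:0],[0:1:0],[1:-1:0]$ at infinity and meet in at most $9-3=6$ affine points; the degenerate regimes $c_1=1$ and $(1-c_1)(c_1-c_2)=tc_2^2$ drop one curve to a conic and give $6-2=4$, exactly reproducing the theorem's three cases (and your identification $\alpha_1=\frac{tc_2^2-(1-c_1)(c_1-c_2)}{tc_2}$ matches the paper's $-A_2$). Your treatment of $c_2=0$, of $a=0$ (where \eqref{tc1c2} makes the linear system in $(u,v)$ nonsingular), and of the excluded points $x\in\{0,-a\}$ is also consistent with the paper's; one can check that a genuine solution at $x=0$ or $x=-a$ still lands on both cleared curves, so it is already counted among the $\le 6$ affine intersection points rather than added on top.

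The one load-bearing step you defer is the no-common-component hypothesis, and it does need to be carried out, since if the two cubics shared a component the solution count could blow up to $O(q)$. It is fillable along the lines you suggest: a linear factor of $A'=\mu_1x\overline{x}(x+\overline{x})+\mu_0x\overline{x}-\gamma\overline{x}-\overline{\gamma}x$ would have leading form $x$, $\overline{x}$ or $x+\overline{x}$, and each of the three substitutions forces $\gamma=0$ or $\gamma=\overline{\gamma}$, contradicting $\gamma\notin\mathbb{F}_q$; hence $A'$ is absolutely irreducible, and $B'$ proportional to $A'$ is excluded by comparing the $x^2$-coefficients of the degree-$2$ parts (which forces $a=0$, handled separately). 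A parallel (short) check is needed in the conic-versus-cubic cases, where the conic $\mu_0x\overline{x}-\gamma\overline{x}-\overline{\gamma}x$ is irreducible because $\gamma\overline{\gamma}\ne0$. With these verifications written out, your argument is complete and arguably cleaner than the paper's; what you trade away is the fully elementary character of the paper's proof, which never leaves univariate root counting.
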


\begin{proof} By Definition \ref{def}, in order to determine the $c$-differential uniformity for $c=(c_1,c_2)\in\mathbb{F}_q^2\backslash\{(1,0)\}$, it is sufficient to calculate the maximum number of solutions in $\mathbb{F}_{q^2}$ of the following system of equations
\begin{numcases}{}
(1-c_1)\Tr^n_m(x)+tc_2H(x)=b_1,\label{Tr-1}\\
H(x+a)-(c_1-c_2)H(x)-c_2\Tr^n_m(x)=b_2\label{Tr-2}
\end{numcases}{}when $a$  and $b=(b_1,b_2)$ run over $\mathbb{F}_{q^2}$ and $\mathbb{F}_q^2$, respectively. 

\noindent {\bf Case I:} $c_2=0$. In this case, $c_1\ne 1$ due to $c\ne(1,0)$. Since $H(x)=\Tr^n_m\left(\frac{\gamma}{x}\right)$, then  Equation~\eqref{Tr-2} can be reduced to
\begin{eqnarray}\label{inverse}
\frac{\gamma^q}{x^q+a^q}+\frac{\gamma}{x+a}-\frac{c_1\gamma^q}{x^q}-\frac{c_1\gamma}{x}=b_2.\end{eqnarray}
On the other hand, from Equation~\eqref{Tr-1}, one has that all the solutions of  Equation~\eqref{Tr-1} can be expressed as $y+x_0$, where $y^q+y=0$ and $x_0$ is a solution of Equation~\eqref{Tr-1}.  If  the system~\eqref{Tr-1}-\eqref{Tr-2} has one solution, either $x=0$ or $x=-a$, without loss of generality, we may assume that $x=0$ is a solution of  \eqref{Tr-1}-\eqref{Tr-2}. Then \eqref{inverse} becomes
$$\frac{\gamma^q}{-y+a^q}+\frac{\gamma}{y+a}+\frac{c_1(\gamma^q-\gamma)}{y}=b_2$$
which has at most three solutions in the set $\{y\in\mathbb{F}_{q^2}^*:y^q+y=0\}$. The  case that $x=-a$ is a solution of  \eqref{Tr-1}-\eqref{Tr-2} can be similarly approached. If  $x=0$ and  $x=-a$ are the solutions of the system~\eqref{Tr-1}-\eqref{Tr-2}, simultaneously, then $a^q+a=0$ and so, Equation~\eqref{inverse} becomes
$$\frac{\gamma-\gamma^q}{y+a}+\frac{c_1(\gamma^q-\gamma)}{y}=b_2$$
which has at most two solutions in the set $\{y\in\mathbb{F}_{q^2}^*\backslash\{-a\}:y^q+y=0\}$. If both $x=0$ and  $x=-a$ are not the solutions of \eqref{Tr-1}-\eqref{Tr-2}, then by replacing $x$ with $y+x_0$, \eqref{inverse} has at most  four solutions since $y^q+y=0$.  Therefore, one can conclude that  the system~ \eqref{Tr-1}-\eqref{Tr-2} has at most four solutions in this case.  In particular,  when $c=0$, i.e., $c_1=0$, it can be proved that $F(x)=b$ for any $b\in\mathbb{F}_{q}^2$ has at most two solutions, by the above analysis.

\noindent {\bf Case II:} $c_2\ne0$. Since $H(x)=\Tr^n_m\left(\frac{\gamma}{x}\right)$ and $a$, $b=(b_1,b_2)$ run over $\mathbb{F}_{q^2}$ and $\mathbb{F}_q^2$, respectively, it is equivalent to solve the following system of equations
\begin{numcases}{}
A_1\Tr^n_m(x)+\Tr^n_m(\frac{\gamma}{x})=b_1,\label{Tr-11}\\
A_2\Tr^n_m(x)+\Tr^n_m(\frac{\gamma}{x+a})=b_2,\label{Tr-21}
\end{numcases}
where $A_1=\frac{1-c_1}{tc_2}$ and $A_2=\frac{(1-c_1)(c_1-c_2)}{tc_2}-c_2$. One should note  that $A_1\ne A_2$, which is obvious by \eqref{tc1c2}. On the other hand, if $x=0$ (or $x=-a$, respectively) is a solution of  system~\eqref{Tr-11}-\eqref{Tr-21}, then $b_1=0$ and $b_2=\Tr^n_m(\frac{\gamma}{a})$ (or $b_1=-A_1\Tr^n_m(a)-\Tr^n_m(\frac{\gamma}{a})$ and $b_2=-A_2\Tr^n_m(a)$, respectively). Firstly, we claim that $x=0$ and $x=-a$ $(a\ne0)$ cannot be the solutions, simultaneously. If the system~\eqref{Tr-11}-\eqref{Tr-21} has the solutions $x=0$, $x=-a$ $(a\ne0)$ at the same time, then $b_1=-A_1\Tr^n_m(a)-\Tr^n_m(\frac{\gamma}{a})=0$ and $b_2=\Tr^n_m(\frac{\gamma}{a})=-A_2\Tr^n_m(a)$. It implies $(A_1-A_2)\Tr^n_m(a)=0$. Thus, we have $\Tr^n_m(a)=0$ due to $A_1\ne A_2$. Further, we can obtain that  $b_1=-\Tr^n_m(\frac{\gamma}{a})=0$. Observe that $\Tr^n_m(\frac{\gamma}{a})=\frac{\gamma}{a}+\frac{\overline{\gamma}}{\overline{a}}=\frac{\gamma-\overline{\gamma}}{a}$ since $\Tr^n_m(a)=0$. Thus, one has that $\Tr^n_m(\frac{\gamma}{a})=\frac{\gamma-\overline{\gamma}}{a}=0$, which is impossible when $a\ne0$ due to $\gamma\notin\mathbb{F}_q$. Now, we  assume that $x\ne0$ and $x\ne -a$.  Then system~\eqref{Tr-11}-\eqref{Tr-21} can be reduced to
\begin{numcases}{}
A_1\Tr^n_m(x\overline{x}^2)+b_1x\overline{x}+\Tr^n_m(\overline{\gamma}x)=0,\label{Tr-12}\\
A_2\Tr^n_m(x\overline{x}^2+a\overline{x}^2)+B_1x\overline{x}+\overline{B}_2\overline{x}+B_2x+B_3=0,\label{Tr-22}
\end{numcases}
where
\begin{align*}
  B_1&=\Tr^n_m(\overline{a}A_2)-b_2, \nonumber \\
   B_2 &=a\overline{a}A_2+\overline{\gamma}-\overline{a}b_2,\nonumber \\
   B_3 &=\Tr^n_m(\overline{a}\gamma)-a\overline{a}b_2.
\end{align*} Next, we consider the above system of equations  case by case.

{\bf Subcase I:}  $A_1=0$. In this subcase, we have $A_2\ne0$. When $b_1=0$, then $\overline{x}=-\gamma^{q-1}x$ from \eqref{Tr-12} and then Equation~\eqref{Tr-22} becomes
$$
A_2\gamma^{q-1}(\gamma^{q-1}-1)x^3+((a\gamma^{q-1}A_2-B_1)\gamma^{q-1}+\overline{a}A_2)x^2+(B_2-\overline{B}_2\gamma^{q-1})x+B_3=0.
$$
The above equation has at most 3 solutions in $\mathbb{F}_{q^2}\backslash\{0,\,-a\}$ due to $A_2\gamma^{q-1}(\gamma^{q-1}-1)\ne0$.

When $b_1\ne 0$, then $\overline{x}(b_1x+\gamma)=-\overline{\gamma}x$ by \eqref{Tr-12}. Note that $b_1x+\gamma\ne0$. Otherwise, if $b_1x+\gamma\ne0$, one then has $x=0$ which contradicts the assumption. Therefore, one can obtain that $\overline{x}=-\frac{\overline{\gamma}x}{b_1x+\gamma}$. Substituting this expression into Equation~\eqref{Tr-22} renders
$$
(\overline{a}b_1-\overline{\gamma})b_1A_2x^4+C_1x^3+C_2x^2+C_3x+B_3\gamma^2=0,
$$
where
\begin{align*}
  C_1&=(\overline{\gamma}-\gamma)\overline{\gamma}A_2+2\overline{a}b_1\gamma A_2-b_1\overline{\gamma}B_1+b_1^2B_2, \nonumber \\
   C_2 &=\Tr^n_m(a\overline{\gamma})A_2-\gamma\overline{\gamma}B_1-b_1\overline{\gamma}\overline{B}_2+2b_1\gamma B_2+b_1^2B_3,\nonumber \\
   C_3 &=-\gamma\overline{\gamma}\overline{B}_2+\gamma^2 B_2+2b_1\gamma B_3,
\end{align*}
which has at most 4 solutions when $a,$ $b=(b_1,b_2)$ go through $\mathbb{F}_{q^2}$ and $\mathbb{F}_q^2$, respectively.

{\bf Subcase II:}  $A_2=0$. By letting $x=y-a$, system~\eqref{Tr-11}-\eqref{Tr-21} becomes
 \begin{numcases}{}
A_1\Tr^n_m(y)+\Tr^n_m(\frac{\gamma}{y-a})=b_1+A_1\Tr^n_m(a),\nonumber\\
\Tr^n_m(\frac{\gamma}{y})=b_2\nonumber.
\end{numcases}{}
Since $x\ne 0,\,-a$, then $y\ne 0,\,a$. And therefore, the equation has at most 4 solutions, as proved in Subcase I.

{\bf Subcase III:}  $A_1A_2\ne0$. System~$\eqref{Tr-22}\times A_1-\eqref{Tr-12}\times A_2$ implies
\begin{eqnarray}
\label{Tr-23}
A_1A_2\Tr^n_m(a\overline{x}^2)+(A_1B_1-b_1A_2)x\overline{x}+\Tr^n_m((A_1B_2-\overline{\gamma} A_2)x)+A_1B_3=0.
\end{eqnarray}
When $a=0$, $B_1=-b_2$, $B_2=\overline{\gamma}$ and $B_3=0$. Further, Equation~\eqref{Tr-23} is reduced to
$$
(1-(A_1b_2+A_2b_1)x)\overline{x}+(A_1-A_2)\overline{\gamma}x=0.
$$
Observe that $1-(A_1b_2+A_2b_1)x\ne0$. Otherwise, one has $x=0$ due to $(A_1-A_2)\overline{\gamma}=0$, which contradicts the assumption $x\ne0, \,-a$. Hence, we have
$$\overline{x}=\frac{(A_1-A_2)\overline{\gamma}x}{1-(A_1b_2+A_2b_1)x}.$$
Substituting the above equation into \eqref{Tr-12} renders a quartic equation. It implies that system~\eqref{Tr-12}-\eqref{Tr-22} has at most 4 solutions in $\mathbb{F}_{q^2}\backslash\{0,\,-a\}$, in this case.

When $a\ne0$, $\eqref{Tr-23}\times x-\eqref{Tr-12}\times aA_2$ gives that
\begin{eqnarray}\label{Tr-24}
(D_1x^2+D_2x-D_3)\overline{x}=-E_1x^3+E_2x^2+E_3x,
\end{eqnarray}
where
$$D_1=A_1B_1-b_1A_2-aA_1A_2, \;D_2=A_1\overline{B}_2-(\gamma+ab_1)A_2,\; D_3=a\gamma A_2 $$
and
$$E_1=\overline{a}A_1A_2,\;E_2=\overline{\gamma}A_2-A_1B_2,\; E_3=a\overline{\gamma}A_2-A_1B_3.
$$

If there exists $x_0\ne 0, -a$ such that $D_1x_0^2+D_2x_0-D_3=0$, then Equation~\eqref{Tr-24} has $x_0$ as one of its solutions only if $-E_1x_0^3+E_2x_0^2+E_3x_0=0$, which is equivalent to $-E_1x_0^2+E_2x_0+E_3=0$ due to $x_0\ne0$. One should note that there are at most two $x_0$'s that satisfy $D_1x_0^2+D_2x_0-D_3=0$ and $-E_1x_0^2+E_2x_0+E_3=0$. Let $x_1\in\mathbb{F}_{q^2}\backslash\{0,\,-a\}$ be a solution of system~\eqref{Tr-12}-\eqref{Tr-22} with $D_1x_1^2+D_2x_1-D_3\ne0$. When there are exactly two $x_0$'s which satisfy this condition, then $D_1x^2+D_2x-D_3=\mu(-E_1x^2+E_2x+E_3)$ for some $\mu\in\mathbb{F}_{q^2}^*$.  Then we have $\overline{x}_1=\mu x_1$ and further, Equation~\eqref{Tr-12} has at most two solutions in $\mathbb{F}_{q^2}^*$. When there is only one $x_0$  which satisfies the above condition,  we then have
\begin{eqnarray*}
\overline{x}_1&=&\frac{-E_1x_1^3+E_2x_1^2+E_3x_1}{D_1x_1^2+D_2x_1-D_3}\\
&=&\frac{E_2(x_1^2-x_0^2)+E_3(x_1-x_0)}{D_1(x_1^2-x_0^2)+D_2(x_1-x_0)}\\
&=&\frac{E_2(x_1+x_0)+E_3}{D_1(x_1+x_0)+D_2}.
\end{eqnarray*}
Substituting the above expression into \eqref{Tr-12} renders a quartic equation, which has at most 4 solutions when $a$ runs over $\mathbb{F}_{q^2}^*$ and $b$ ranges over $\mathbb{F}_q^2$. Therefore, we conclude that system~\eqref{Tr-12}-\eqref{Tr-22} has at most 5 solutions in $\mathbb{F}_{q^2}\backslash\{0,\,-a\}$, when this case happens.

If  there is no $x\in\mathbb{F}_{q^2}\backslash\{0,\,-a\}$ such that $D_1x^2+D_2x-D_3=-E_1x^2+E_2x+E_3=0$, we then have
\begin{eqnarray*}
\overline{x}&=&\frac{-E_1x^3+E_2x^2+E_3x}{D_1x^2+D_2x-D_3}.
\end{eqnarray*}
Similarly, by replacing it into \eqref{Tr-12}, we obtain a equation with the highest degree 7, namely,
\begin{eqnarray}\label{Tr-13}A_1E_1(E_1-D_1)x^7+L_1x^6+L_2x^5+L_3x^4+L_4x^3+L_5x^2+D_3(\overline{\gamma}D_3-\gamma E_3)x=0,\end{eqnarray}
where \begin{eqnarray*}
L_1&=&A_1(D_1E_2-D_2E_1-2E_1E_2)-b_1D_1E_1, \\
L_2&=&A_1(E_2(E_2+D_2)+E_3(D_1-E_1)+E_1(D_3-E_3))\\
&&\qquad\qquad+b_1(D_1E_2-D_2E_1)+D_1(\overline{\gamma}D_1-\gamma E_1),\\
L_3&=&A_1(E_3(D_2+E_2)+E_2(E_3-D_3))+b_1(D_1E_3+D_2E_2+D_3E_1)\\
&&\qquad\qquad+\gamma(D_1E_2-D_2E_1)+2\overline{\gamma}D_1D_2,\\
L_4&=&A_1E_3(E_3-D_3)+b_1(D_2E_3-D_3E_2)+\overline{\gamma}(D_2^2-2D_1D_3)\\
&&\qquad\qquad+\gamma(D_1E_3+D_2E_2+D_3E_1),\\
L_5&=&\gamma(D_2E_3-D_3E_2)-2\overline{\gamma}D_2D_3.
\end{eqnarray*}
Obviously, Equation~\eqref{Tr-13} has at most 6 solutions in $\mathbb{F}_{q^2}^*$. Therefore,  system~\eqref{Tr-12}-\eqref{Tr-22}  has at most 6 solutions in $\mathbb{F}_{q^2}\backslash\{0,\,-a\}$, when this case happens.

From Subcases I-III, we can see that system~\eqref{Tr-11}-\eqref{Tr-21} has at most 4 solutions in when $\mathbb{F}_{q^2}\backslash\{0,\,-a\}$, when $A_1A_2=0$, and  at most 6 solutions in when $\mathbb{F}_{q^2}\backslash\{0,\,-a\}$, when $A_1A_2\ne0$.

Recall that $x=0$ and $x=-a$ $(a\ne0)$ cannot be the solutions of system~\eqref{Tr-11}-\eqref{Tr-21}, simultaneously. Assuming that $x=0$ is a solution of \eqref{Tr-11}-\eqref{Tr-21}, then $b_1=0$ and $b_2=\Tr^n_m(\frac{\gamma}{a})$. By Subcase I, one can easily check that the system~\eqref{Tr-11}-\eqref{Tr-21} has at most 3 solutions in $\mathbb{F}_{q^2}^*$ due  to $b_1=0$. Similarly,  for Subcase II. As for Subcase III, since $B_3 =\Tr^n_m(\overline{a}\gamma)-a\overline{a}b_2=\Tr^n_m(\overline{a}\gamma)-a\overline{a}\Tr^n_m(\frac{\gamma}{a})=0$ and further,
$\overline{\gamma}D_3-\gamma E_3=a\gamma\overline{\gamma}A_2-\gamma(a\overline{\gamma} A_2-A_1B_3)=0$. Hence, one can see that \eqref{Tr-13} has at most $5$ solutions in~$\mathbb{F}_{q^2}^*$. It implies that  the system~\eqref{Tr-11}-\eqref{Tr-21} has at most 6 solutions in $\mathbb{F}_{q^2}$ when $x=0$ is a solution of system~\eqref{Tr-11}-\eqref{Tr-21}.

 Assume that $x=-a$ $(a\ne0)$ is a solution of \eqref{Tr-11}-\eqref{Tr-21}, then $b_1=-A_1\Tr^n_m(a)-\Tr^n_m(\frac{\gamma}{a})$ and $b_2=-A_2\Tr^n_m(a)$. When $A_1=0$, we have $\frac{\gamma}{x}=\mu-\frac{\gamma}{a}$ and $\mu+\overline{\mu}=0$ from \eqref{Tr-11}. Therefore, \eqref{Tr-21} becomes
 $$
 A_2\Tr_m^n(\frac{a\gamma}{a\mu-\gamma})+\Tr^n_m(\frac{a\gamma \mu-\gamma^2}{a^2\mu})=-A_2\Tr^n_m(a).
 $$
 The above equation can be further reduced to
 $$
 A_2(\frac{a\gamma}{a\mu-\gamma}-\frac{\overline{a}\overline{\gamma}}{\overline{a}\mu+\overline{\gamma}})+\frac{a\gamma \mu-\gamma^2}{a^2\mu}+\frac{\overline{a}\overline{\gamma }\mu+\overline{\gamma}^2}{\overline{a}^2\mu}=0,
 $$
 which has at most 3 solution in $\{\mu\in\mathbb{F}_{q^2}| \mu+\overline{u}=0 \;{\rm and} \;\mu\ne\frac{\gamma}{a}\}$. The case $A_2=0$ can be similarly proved. Therefore, when $A_1A_2=0$, system~\eqref{Tr-11}-\eqref{Tr-21} has at most 4 solutions in this case. When $A_1A_2\ne0$, let $x=y-a$, then system~\eqref{Tr-11}-\eqref{Tr-21} becomes
 \begin{numcases}{}
A_1\Tr^n_m(y)+\Tr^n_m(\frac{\gamma}{y-a})=-\Tr^n_m(\frac{\gamma}{a}),\nonumber\\
A_2\Tr^n_m(y)+\Tr^n_m(\frac{\gamma}{y})=0.\nonumber
\end{numcases}
According to the analysis for the case that $x=0$ is a solutions of \eqref{Tr-11}-\eqref{Tr-21}, we can derive that the above system of equations has at most 5 solutions in $\mathbb{F}_{q^2}^*$, that is, system~\eqref{Tr-11}-\eqref{Tr-21} has at most 5 solutions in $\mathbb{F}_{q^2}\backslash\{-a\}$ when $x=-a$ is a solution of system~\eqref{Tr-11}-\eqref{Tr-21}. This completes the proof.
\end{proof}
\subsection{Functions with low $c$-differential uniformity for  $c\in\mathbb{F}_q$}
On one hand, it is difficult to find new bivariate functions whose $c$-differential uniformity is always low for any $c\in\mathbb{F}_q^2$. On the other hand, due to the complexity of the $c$-differential equation of bivariate functions $F(x,y)$, generally speaking, it is not easy to determine the $c$-differential uniformity for $F(x,y)$. However, if we select $c=(c_1,0)$, then system~\eqref{c-def} is reduced to
\begin{eqnarray}
\label{c-def2}
\left\{\begin{array}{cll}
G(x+a_1,y+a_2)-c_1G(x,y)&=&b_1,\\
H(x+a_1,y+a_2)-c_1H(x,y)&=&b_2,
\end{array}\right.
\end{eqnarray}
which seems more hopeful to deal with. Thus, in what follows, we aim to construct more low $c$-differential uniformity functions $F(x,y)$, including the P$c$N and the AP$c$N functions with respect to $c\in\mathbb{F}_q\backslash\{1\}$.

\begin{prop}
\label{f-h1h2}
Let $F(x,y)=(g(x),H(x,y))$ with $H(x,y)=h_1(x)L_1(y)+\gamma_1h_2(x)+\gamma_2L_2(y)$, where $\gamma_1,\,\gamma_2\in\mathbb{F}_q$, $g(x),\,h_i(x)$ are functions from $\mathbb{F}_q$ to $\mathbb{F}_q$ and $L_i$ is a linearized polynomial over $\mathbb{F}_q$ for $1\leq i\leq 2$. Let $c=(c_1,c_2)\in\mathbb{F}_{q}^2\backslash\{(1,0)\}$ and $\gamma_2\ne0$.
 If $g(x)$ is a P$c$N function with respect to some $c\in\{(c_1,0):c_1\in\mathbb{F}_q\backslash\{1\}\}$ and   $L_2(y)+\gamma L_1(y)$ is either $2$-to-$1$ or permutation for any $\gamma\in\mathbb{F}_q$, then $_{c}\Delta_F\leq2.$
\end{prop}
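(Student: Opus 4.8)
The plan is to exploit the special structure $c=(c_1,0)$, which collapses the defining system into~\eqref{c-def2}, so that the two component equations decouple in a convenient way. Writing $G(x,y)=g(x)$, the first equation of~\eqref{c-def2} reads $g(x+a_1)-c_1g(x)=b_1$ and involves only $x$. Since $c_1\ne1$ and $g$ is P$c$N with respect to $c_1$, this equation has at most one solution $x=x_0$ for every choice of $a_1,b_1$ (for $c_1\ne1$ the P$c$N property covers the case $a_1=0$ as well). Thus $x$ is pinned down, and the entire count of solutions of the system reduces to counting the admissible values of $y$ once $x_0$ is fixed.

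First I would substitute $x=x_0$ into the second equation $H(x_0+a_1,y+a_2)-c_1H(x_0,y)=b_2$ and expand $H(x,y)=h_1(x)L_1(y)+\gamma_1h_2(x)+\gamma_2L_2(y)$. Using the $\mathbb{F}_q$-linearity of $L_1,L_2$ to write $L_i(y+a_2)=L_i(y)+L_i(a_2)$, every term that does not depend on $y$ — namely the contributions of $\gamma_1h_2$, of $h_1(x_0+a_1)L_1(a_2)$, and of $\gamma_2L_2(a_2)$ — collects into a single constant, which I absorb into the right-hand side as some $b_2'\in\mathbb{F}_q$. The $y$-dependent part then reduces to the clean expression
\begin{equation*}
\bigl(h_1(x_0+a_1)-c_1h_1(x_0)\bigr)L_1(y)+\gamma_2(1-c_1)L_2(y)=b_2'.
\end{equation*}

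The crucial observation is that the coefficient of $L_2(y)$ is $\gamma_2(1-c_1)$, which is nonzero since $\gamma_2\ne0$ and $c_1\ne1$. Dividing through by this coefficient rewrites the equation in the form $L_2(y)+\gamma L_1(y)=b_2'/(\gamma_2(1-c_1))$, where $\gamma=(h_1(x_0+a_1)-c_1h_1(x_0))/(\gamma_2(1-c_1))\in\mathbb{F}_q$. By hypothesis $L_2(y)+\gamma L_1(y)$ is either $2$-to-$1$ or a permutation for every $\gamma\in\mathbb{F}_q$, so for this particular $\gamma$ the equation has at most two solutions $y$. Combining with the uniqueness of $x_0$ yields at most $1\cdot2=2$ solutions $(x,y)$ of the full system for each $(a,b)$, hence $_{c}\Delta_F\le2$, as claimed.

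I expect the only delicate point to be the bookkeeping in this reduction: one must verify that after substituting $x=x_0$ the surviving $y$-dependent part is genuinely an $\mathbb{F}_q$-linear combination of $L_1(y)$ and $L_2(y)$, so that the ``$2$-to-$1$ or permutation'' hypothesis applies to the exact shape $L_2+\gamma L_1$. In particular one checks that no $y$-dependence leaks out of the $h_2$ term, which is immediate since $h_2$ enters $H$ only through $\gamma_1h_2(x)$ with no $y$. Beyond ensuring $\gamma_2(1-c_1)\ne0$ so that the leading coefficient can be normalized, the argument carries no serious obstacle.
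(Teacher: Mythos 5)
Your proposal is correct and follows essentially the same route as the paper: the choice $c=(c_1,0)$ decouples the system, the P$c$N hypothesis on $g$ fixes $x$ uniquely, and the $y$-dependent part of the second equation collapses to $\bigl(h_1(x_0+a_1)-c_1h_1(x_0)\bigr)L_1(y)+\gamma_2(1-c_1)L_2(y)=b'$, which after normalizing by $\gamma_2(1-c_1)\ne0$ has at most two solutions by the hypothesis on $L_2+\gamma L_1$. The only (harmless) difference is that you make the normalization step explicit where the paper leaves it implicit.
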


\begin{proof}  Note that $c\in\mathbb{F}_q\backslash\{1\}$. Thus, one has $c=(c_1,0)$ for some $c_1\ne1$.  For any $a=(a_1,a_2)$, $b=(b_1,b_2)$, the $c$-differential equation of $F(x,y)$ can be expressed as
\begin{eqnarray*}
\left\{\begin{array}{cll}
g(x+a_1)-c_1g(x)&=&b_1,\\
\left(h_1\left(x+a_1\right)-c_1h_1(x)\right)L_1(y)+\gamma_2(1-c_1)L_2(y)&=&b'
\end{array}\right.
\end{eqnarray*}
by \eqref{c-def2}, where $$b'=b_2-\gamma_1\left(h_2(x+a_1)-c_1h_2(x)\right)-\gamma_2L_2(a_2)-h_1(x+a_1)L_1(a_2),$$ which is linear on $b_2$.
Since $g(x)$ is P$c$N and $L_2(y)+\gamma L_1(y)$ is 2-to-1 or permutation for any $\gamma\in\mathbb{F}_q$, then $g(x+a_1)-c_1g(x)=b_1$ has exactly one solution for any $a_1,\;b_1\in\mathbb{F}_q$ and there are at most two solutions for the second equation. Then the result follows.
\end{proof}

\begin{remark} Note that if $h_1$ is a non-zero constant polynomial in Theorem~\textup{\ref{f-h1h2}}, namely, $F$ has the form $F(x,y)=(g(x),h(y)+f(x))$, where $f(x)$ is any polynomial over $\mathbb{F}_q[x]$ and $h$ is not necessarily a linearized polynomial. In this case, we can check that if $g$ is $(c,\delta_1)$-uniform and $h$ is $(c,\delta_2)$-uniform for the same $c\in\mathbb{F}_q\backslash\{1\},$ then  $F$ is $(c,\delta_1\delta_2)$-uniform. Therefore, if we choose $g$ and $h$ such that both of them are either P$c$N or AP$c$N, then  the $c$-differential uniformity of $F$ is at most $4$. Based on the known P$c$N and AP$c$N functions, an abundance of new classes of P$c$N and AP$c$N functions can be produced by this way. For instance, let $g(x)=x^{\frac{p^{k_{1}}+1 }{2}} $ and $h(y)=y^{p^{k_2}+1}$ with $p>2$, $\frac{2m}{\gcd(2m,k_1)}$ and $\frac{m}{\gcd(m,k_2)}$ are odd, then $g$ is P$c$N and $h$ is AP$c$N for $c=-1$ by~\textup{\cite[Theorem 3 and Theorem 6]{SCP}}. By selecting $f$ at random, Magma experiments always show that $F(x,y)=(g(x),h(y)+f(x))$ is  AP$c$N  with respect to $c=-1$.
 \end{remark}

Next, we give certain examples to prove the existence of  functions in Theorem \ref{f-h1h2}.
\begin{example}
 Let $q=2^m$ and $g(x)$ be any linearized permutation polynomial over $\mathbb{F}_q$. Let $L_2(x)=x$ and $h_2(x)=x^{2^k}$ with $\gcd(k,m)=1$. Obviously, one can easily check that $L_2(x)+\gamma h_2(x)$ is a permutation when $\gamma =0$ or $2$-to-$1$ when $\gamma \ne0$.  Selecting $\gamma_1L_1(x)=x^2+x$ or $\gamma_1=0$, $h_1(x)=x^{-1}$ or $h_1(x)=x^{2^k+1}$, Magma always shows that $F(x,y)$ is AP$c$N for $c\in\mathbb{F}_q\backslash\{1\}$.
\end{example}

\begin{example} Let $q=3^m$ with $m$ odd and $F(x,y)=(x^{\frac{3^k+1}{2}},y^3+\gamma y)$, where $k$ is even and $\gamma$ is a square element in $\mathbb{F}_q$. Then one has $g(x)=x^{\frac{3^k+1}{2}}$ is P$c$N with respect to $c=-1$ from~\textup{\cite[Theorem 6]{SCP}} and $y^3+\gamma y$ permutes $\mathbb{F}_q$.  Magma experiments show that $F(x,y)$ is P$c$N for $c=(-1,0)$. This is consistent with the result in Theorem~\textup{\ref{f-h1h2}}.
\end{example}

By employing quadratic functions and linearized polynomials, we present two classes of functions as below.
\begin{prop}\label{x^(p^k+1)+y^(p^k+1)-L(x+y)} Let $q=p^m$ and $F(x,y)=(x^{p^k+1}+\gamma y^{p^k+1},L(x+y))$, where $L$ is a linearized permutation polynomial over $\mathbb{F}_q$ and $\gamma\in\mathbb{F}_q\backslash\{-1\}$. Let $c\in\{(c_1,0):c_1\in\mathbb{F}_q\backslash\{1\}\}$ and $d=\gcd(p^k+1,p^m-1)$. If $c,\,\gamma\in\mathbb{F}_{p^{(m,k)}}$, then $F$ is $(c,d)$-uniform. Otherwise, $F$ is $(c,p^{\gcd(m,k)}+1)$-uniform.
\end{prop}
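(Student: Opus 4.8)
The plan is to exploit the simplification afforded by $c=(c_1,0)$ with $c_1\neq 1$, so that by \eqref{c-def2} the $c$-differential equation of $F$ splits into
\[
\left\{
\begin{array}{l}
(x+a_1)^{p^k+1}+\gamma(y+a_2)^{p^k+1}-c_1\bigl(x^{p^k+1}+\gamma y^{p^k+1}\bigr)=b_1,\\
L(x+y+a_1+a_2)-c_1L(x+y)=b_2.
\end{array}
\right.
\]
First I would treat the second equation: since $L$ is linearized and permutes $\mathbb{F}_q$, it reduces to $(1-c_1)L(x+y)=b_2-L(a_1+a_2)$, which (as $c_1\neq 1$) forces $x+y=s$ for a single value $s\in\mathbb{F}_q$ determined by $a$ and $b_2$. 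Thus every solution lies on the line $y=s-x$, and counting solutions $(x,y)$ amounts to counting the admissible $x$.

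Next I would substitute $y=s-x$ into the first equation. Using $(x+a_1)^{p^k+1}=x^{p^k+1}+a_1x^{p^k}+a_1^{p^k}x+a_1^{p^k+1}$ and $(s-x)^{p^k+1}=s^{p^k+1}-s^{p^k}x-sx^{p^k}+x^{p^k+1}$, the expansion produces only the monomials $x^{p^k+1},x^{p^k},x,1$, leaving a single equation of the shape
\[
Ax^{p^k+1}+Bx^{p^k}+Dx+C=0,
\]
where $A=(1-c_1)(1+\gamma)$, $B=a_1-\gamma a_2-\gamma(1-c_1)s$, $D=a_1^{p^k}-\gamma a_2^{p^k}-\gamma(1-c_1)s^{p^k}$, and $C$ is a constant into which $b_1$ enters linearly (hence $C$ is free). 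The crucial point is that $A\neq 0$ because $c_1\neq 1$ and $\gamma\neq -1$. Dividing by $A$ and eliminating the $x^{p^k}$ term by the shift $x=z-B/A$ turns the equation into the Gold-type trinomial $z^{p^k+1}+a'z+b'=0$ with $a'=\frac{D}{A}-\bigl(\frac{B}{A}\bigr)^{p^k}$, to which Lemma~\ref{mainlem} applies directly.

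The heart of the argument is to decide when $a'$ vanishes. I would show that $a'=0$ for every choice of $a_1,a_2,s$ if and only if $c_1,\gamma\in\mathbb{F}_{p^{\gcd(m,k)}}$: the ``if'' direction follows since then $A^{p^k}=A$ and $D=B^{p^k}$; for the converse one specializes, taking $a_2=s=0$ to force $A^{p^k}=A$, then $a_1=s=0$ to force $\gamma^{p^k}=\gamma$, and finally $c_1^{p^k}=c_1$ follows from $1-c_1=A/(1+\gamma)$. When $c_1,\gamma\in\mathbb{F}_{p^{\gcd(m,k)}}$ the equation becomes $z^{p^k+1}=\kappa$, whose number of roots is $0$, $1$, or exactly $d=\gcd(p^k+1,p^m-1)$; choosing $b_1$ so that $\kappa$ is a nonzero $(p^k+1)$-th power realizes $d$ solutions, giving $_c\Delta_F=d$. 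Otherwise one can choose $(a_1,a_2,s)$ with $a'\neq 0$, and Lemma~\ref{mainlem} bounds the number of roots by $p^{\gcd(m,k)}+1$; since $d\leq p^{\gcd(m,k)}+1$, this bound also absorbs the residual $a'=0$ subcases, so $_c\Delta_F\leq p^{\gcd(m,k)}+1$.

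The main obstacle I anticipate is the tightness in the second case: realizing exactly $p^{\gcd(m,k)}+1$ roots requires, after normalizing $a'$ to $1$ by a Frobenius rescaling (legitimate since $x\mapsto x^{p^k}$ is bijective), a value $b'$ for which the trinomial splits completely, and by the Remark following Lemma~\ref{mainlem} no such $b'$ exists when $r=[\mathbb{F}_q:\mathbb{F}_{p^{\gcd(m,k)}}]=2$, i.e. when $m=2k$. I would therefore verify that the free parameters $(a_1,a_2,s,b_1)$ decouple into an essentially independent choice of $a'\neq 0$ and $b'$, so that the extremal value $p^{\gcd(m,k)}+1$ is hit whenever $m\neq 2k$, and I would flag the $m=2k$ situation as the boundary case, exactly parallel to the split between cases (1) and (2) of Theorem~\ref{L-G}.
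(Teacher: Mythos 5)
Your proposal is correct and follows essentially the same route as the paper: eliminate $y$ via the second (linearized) equation, reduce the first equation to $Ax^{p^k+1}+Bx^{p^k}+Dx+C=0$ with $A=(1-c_1)(1+\gamma)\ne 0$ and $C$ linear in $b_1$, and apply Lemma~\ref{mainlem} after a shift, splitting on whether the linear coefficient of the resulting trinomial vanishes identically (equivalently $c_1,\gamma\in\mathbb{F}_{p^{\gcd(m,k)}}$). The $m=2k$ boundary case you flag is closed exactly as you anticipate and as the paper does: taking $a_1=a_2=s=0$ yields the degenerate equation $x^{p^k+1}=\kappa$, which attains $\gcd(p^k+1,p^{2k}-1)=p^k+1$ solutions for suitable $b_1$.
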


\begin{proof}
According to \eqref{c-def2}, it suffices to determine the maximum number of solutions of
\begin{numcases}{}
(x+a_1)^{p^k+1}+\gamma(y+a_2)^{p^k+1}-c_1x^{p^k+1}-c_1\gamma y^{p^k+1}=b_1,\label{gold-1}\\
(1-c_1)L(x+y)+L(a_1+a_2)=b_2,\label{gold-2}
\end{numcases}
when  $a=(a_1,a_2)$, $b=(b_1,b_2)$ run over $\mathbb{F}_q^2$. From  \eqref{gold-2}, we have $y=L^{-1}\left(\frac{b_2-L(a_1+a_2)}{1-c_1}\right)-x$. Replacing it into
\eqref{gold-1} gives that
\begin{eqnarray}\label{G-c2=0}
A_1x^{p^k+1}+A_2x^{p^k}+A_3x+A_4=0,
\end{eqnarray} where
\begin{alignat*}{2}
A_1&=(1+\gamma)(1-c_1),&\hspace{20pt} A_2&=a_1-(b_2'+a_2-c_1b_2')\gamma,\\ A_3&=a_1^{p^k}-((b_2'+a_2)^{p^k}-c_1b_2'^{p^k})\gamma,&\hspace{20pt}  A_4&=a_1^{p^k+1}+((b_2'+a_2)^{p^k+1}-c_1 b_2'^{p^k+1})\gamma -b_1
\end{alignat*}
with $b_2'=L^{-1}\left(\frac{b_2-L(a_1+a_2)}{1-c_1}\right)$. Note that $A\ne0$ due to $c_1\ne 1$ and $\gamma\ne -1$. If $c,\,\gamma\in\mathbb{F}_{p^{(m,k)}}$, then $\left(\frac{A_2}{A_1}\right)^{p^k}=\frac{A_3}{A_1}$ and further, one has
$$\left(x+\frac{A_2}{A_1}\right)^{p^k+1}-\left(\frac{A_2}{A_1}\right)^{p^k+1}+\frac{A_4}{A_1}=0,$$
which is a shift of $x^{p^k+1}$. Thus, $F(x,y)$ is $(c,d)$-uniform in this case.

 If $c,\,\gamma\notin\mathbb{F}_{p^{(m,k)}}$, there always exist $A_1,\,A_2,\,A_3 $ such that $\left(\frac{A_2}{A_1}\right)^{p^k}\ne\frac{A_3}{A_1}$, since $a$ and $b$ are arbitrary. Let $x=B_1z+B_2$ with $B_1=\left(\frac{A_3}{A_1}-\left(\frac{A_2}{A_1}\right)^{p^k}\right)^{p^{m-k}}$ and $B_2=-\frac{A_2}{A_1}$, then we have $$z^{p^k+1}+z+\frac{A_1B_2^{p^k+1}+A_2B_2^{p^k}+A_3B_2+A_4}{A_1B_1^{p^k+1}}=0$$
which has $0,\,1,\,2$ or $p^{\gcd(k,m)}+1$ solutions by Lemma~\ref{mainlem}. When $m\ne 2k$, since  the constant term of the above equation is linear on $b_1$, the value $p^{\gcd(k,m)}+1$ is achievable by Lemma~\ref{mainlem}.  When $m=2k$, again by Lemma~\ref{mainlem}, we can see that the above equation has at most two solutions. However, returning to Equation~\eqref{G-c2=0}, we always have $(a_1,a_2)$, $(b_1,b_2)\in\mathbb{F}_q^2$ such that $A_2=A_3=0$, when $a_1=a_2=b_2=0$. Thus, Equation~\eqref{G-c2=0} has at most $p^k+1$ solutions when $m=2k$ in this case.

Note that $p^{\gcd(k,m)}+1\geq\gcd(p^k+1,p^m-1)$. Then the result follows by the above analysis, which completes the proof.
\end{proof}

\begin{remark}
Recall that $\gcd(2^k+1,2^m-1)=\frac{2^{\gcd(2k,m)}-1}{2^{\gcd(k,m)}-1}$ and $\operatorname{gcd}\left(p^{k}+1, p^{m}-1\right)=2$ if $p>2$ and $\frac{m}{\operatorname{gcd}(m, k)}$ is odd. Therefore, one can obtain that if $c,\,\gamma\in\mathbb{F}_{p^{(m,k)}}$ with $c\ne 1,\,\gamma\ne-1$ and $\frac{m}{\operatorname{gcd}(m, k)}$ is odd, then $F$ is P$c$N when $p=2$ and AP$c$N when $p>2$.
\end{remark}

\begin{example} Choosing $q=p^m=2^6$, $k=2$ and $L(x)=x$, then $d=\gcd(p^k+1,p^m-1)=1$. When $\gamma\in\mathbb{F}_{p^2}\backslash\{1\}$,  Magma shows that $F$ in Theorem~\textup{\ref{x^(p^k+1)+y^(p^k+1)-L(x+y)}} is P$c$N for any $c\in\mathbb{F}_{p^2}\backslash\{1\}$ and $F$ is $(c,5)$-uniform when $c\in\mathbb{F}_{q}\backslash\mathbb{F}_{p^2}$.
\end{example}

\begin{example} Choosing $q=p^m=3^3$, $k=2$ and $L(x)=x^3+ x$. For any $\gamma\in\mathbb{F}_{q}\backslash\{2\}$,  Magma shows $F$ is $(c,4)$-uniform when $c\in\mathbb{F}_{q}\backslash\{1\}$.
\end{example}

\begin{prop}
\label{xy-xy+L(x+y)}
Let $q=p^m$ and $F(x,y)=\left(xy,\sum_{i=1}^m \gamma_i(xy)^{p^i}+L(x+y)\right)$, where $L$ is a linearized permutation polynomial over $\mathbb{F}_q$ and $\gamma_i\in\mathbb{F}_q$ for $1\leq i\leq m$. If $\gamma_{l}\ne0$ only for $l\in\{i_1,\,i_2,\,\ldots,\,i_t\}$, where $1\leq t \leq m$. Let $d=\gcd(i_1,i_2,\ldots,i_t,m)$, then $F$ is AP$c$N for $c\in\mathbb{F}_{p^d}\backslash\{1\}$.
In particular, if $\gamma_i=0$ for all $1\leq i\leq m$, $F$ is AP$c$N for $c\in\{(c_1,0):c_1\in\mathbb{F}_q\backslash\{1\}\}$ and $F$ is always AP$c$N for $c=(0,0)$ regardless of the values of $\gamma_i'$s.
\end{prop}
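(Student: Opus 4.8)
The plan is to exploit the decoupling forced by $c=(c_1,0)$, so that the reduced system~\eqref{c-def2} governs the count. Writing $u=xy$ and $w=(x+a_1)(y+a_2)$, the first component equation is $w-c_1u=b_1$, i.e. $w=c_1u+b_1$, while the second reads $\sum_i\gamma_i w^{p^i}+L(x+a_1+y+a_2)-c_1\big(\sum_i\gamma_i u^{p^i}+L(x+y)\big)=b_2$. First I would substitute $w=c_1u+b_1$ into this second equation and apply the Frobenius identity $w^{p^i}=c_1^{p^i}u^{p^i}+b_1^{p^i}$, which gives $w^{p^i}-c_1u^{p^i}=(c_1^{p^i}-c_1)u^{p^i}+b_1^{p^i}$ for each index $i$.

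The key step is the observation that all the surviving terms in $u$ vanish. Indeed $\gamma_i\neq0$ forces $i\in\{i_1,\dots,i_t\}$, so $d=\gcd(i_1,\dots,i_t,m)$ satisfies $d\mid i$; since $c_1\in\mathbb{F}_{p^d}$ we get $c_1^{p^i}=c_1$, hence $(c_1^{p^i}-c_1)u^{p^i}=0$ for every index that actually appears. Consequently $\sum_i\gamma_i(w^{p^i}-c_1u^{p^i})=\sum_i\gamma_i b_1^{p^i}$ is a constant depending only on $b_1$. Because $L$ is linearized, $L(x+a_1+y+a_2)-c_1L(x+y)=(1-c_1)L(x+y)+L(a_1+a_2)$, so the second equation collapses to $(1-c_1)L(x+y)=b_2-L(a_1+a_2)-\sum_i\gamma_i b_1^{p^i}$. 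As $c_1\neq1$ and $L$ is a permutation, this pins $x+y$ to a single value $s\in\mathbb{F}_q$.

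With $x+y=s$ fixed I would substitute $y=s-x$ into the first equation $(1-c_1)xy+a_2x+a_1y+a_1a_2=b_1$, producing the quadratic $(1-c_1)x^2-\big((1-c_1)s+a_2-a_1\big)x-(a_1s+a_1a_2-b_1)=0$, whose leading coefficient $1-c_1$ is nonzero. Each root $x$ determines $y=s-x$ uniquely, so the system has at most two solutions and $_c\Delta_F\le2$. To see that equality holds, so that $F$ is AP$c$N rather than P$c$N, I would exhibit one pair $(a,b)$ with two solutions: taking $a_1=b_1=0$, $a_2\neq0$ and choosing $b_2$ so that the resulting $s$ satisfies $s+a_2/(1-c_1)\neq0$, the quadratic factors as $x\big((1-c_1)x-(1-c_1)s-a_2\big)=0$ and has the two distinct roots $0$ and $s+a_2/(1-c_1)$, a construction valid in any characteristic. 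The special claims then follow at once: for $c=(0,0)$ one has $c_1^{p^i}-c_1=0$ trivially for every $i$, so the decoupling needs no condition on the $\gamma_i$, and the case $\gamma_i=0$ for all $i$ is merely the subcase in which the constant $\sum_i\gamma_i b_1^{p^i}$ is empty.

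The main obstacle is really the single algebraic insight in the second paragraph: recognizing that the hypothesis $c_1\in\mathbb{F}_{p^d}$ with $d\mid i_j$ is precisely what annihilates the Frobenius twists $(c_1^{p^i}-c_1)u^{p^i}$, thereby turning the a priori coupled, high-degree second equation into an affine constraint on $x+y$. Once that collapse is established, the remainder is the routine count of roots of a quadratic in one variable and the elementary factorization used to certify that the bound $2$ is attained.
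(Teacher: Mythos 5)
Your proposal is correct and follows essentially the same route as the paper: take $p^i$-th powers of the first equation to eliminate the $(xy)^{p^i}$ terms via $c_1^{p^i}=c_1$ (which holds because $c_1\in\mathbb{F}_{p^d}$ and $d\mid i$ for every index with $\gamma_i\ne0$), reduce the second equation to an affine constraint pinning $x+y$, and count roots of the resulting quadratic. Your explicit witness for attaining two solutions (with $a_1=b_1=0$, $a_2\ne0$) is a minor variant of the paper's choice $a_1=a_2=b_1=0$, $b_2\ne0$, and both are valid in every characteristic.
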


\begin{proof} Let $c\in\{(c_1,0):c_1\in\mathbb{F}_q\backslash\{1\}\}$. By \eqref{c-def2}, the $c$-differential equation of $F(x,y)$ can be written as
\begin{numcases}{}
(1-c_1)xy+a_2x+a_1y+a_1a_2=b_1,\label{xy-1}\\
\sum_{i=1}^m\gamma_i\left(\left((x+a_1)(y+a_2)\right)^{p^i}-c_1(xy)^{p^i}\right)+(1-c_1)L(x+y)=b_2-L(a_1+a_2).\label{xy-2}
\end{numcases}
By taking $p^i$-th power on both sides of \eqref{xy-1} for $1\leq i\leq m$ in turn, \eqref{xy-2} can be reduced to
$$\sum_{i=1}^m\gamma_i\left(c^{p^i}-c\right)(xy)^{p^i}+(1-c)L(x+y)=b_2-L(a_1+a_2)-\sum_{i=1}^m b_1^{p^i}\gamma_i.$$
Let $\gamma_{i_j}\ne0$ for $1\leq j \leq t$ and $\gamma_{l}=0$ for $l\notin\{i_1,\,i_2,\,\ldots,\,i_t\}$. Since $d=\gcd(i_1,i_2,\ldots,i_t,m)$ and $c\in\mathbb{F}_{p^d}\backslash\{1\}$, the above equation can be reduced to
$$(1-c)L(x+y)=b_2-L(a_1+a_2)-\sum_{i=i_1}^{i_t}b_1^{p^i}\gamma_i.$$
Thus, $y$ can be uniquely determined by $x$, that is, $$y=L^{-1}\left(\frac{b_2-L(a_1+a_2)-\sum_{i=i_1}^{i_t}b_1^{p^i}\gamma_i}{1-c}\right)-x.$$
Substituting it into Equation~\eqref{xy-1},  then we can obtain a quadratic equation which has at most two solutions in $\mathbb{F}_q$ and it can  attain two solutions when choosing $a,\,b$, properly. For example, if $a_1=a_2=b_1=0$ and $b_2\ne0$, then Equation~\eqref{xy-1} has the form $x^2+tx=0$ with $t\ne0$, which has exactly two solutions for any $c\in\mathbb{F}_q$.

In particular, if $\gamma_i=0$ for all $1\leq i\leq m$ or $c=0$, it is immediate that $F$ is AP$c$N by the above analysis. This completes the proof.
\end{proof}

\begin{example}
Let $q=2^m$ and $F(x,y)=\left(xy, (xy)^{2^4}+(xy)^{2^2}+x+y\right)$. Magma experiments show that $F$ is an AP$c$N function for $c\in\mathbb{F}_{2^2}$ and $c\ne1$.
\end{example}

%
%
As constructed in  \eqref{inv},  if  $F$ is a function from $\mathbb{F}_{q^2}$ to $\mathbb{F}_q\times\mathbb{F}_q$, we can give the following results  (the first one can be similarly proved as in Theorem~\ref{inv}).

\begin{prop}
\label{inv1}
Let $H$ be a function from $\mathbb{F}_{q^2}$ to $\mathbb{F}_{q}$ and $F(x)=\left(\Tr^n_m(x),\, H(x)\right)$. Let $c=(c_1,c_2)\in\mathbb{F}_q^2\backslash\{(1,0)\}$.
\begin{enumerate}
\item[\rm{(1)}]  When $H(x)=\Tr^n_m\left(\gamma x^{p^k+1}\right)$, where $\gamma^q+\gamma\ne 0$.  If $\gcd(k,m)=1$, then $_{c}\Delta_F(x)\leq6$;  if $c\in\{(c_1,0):c_1\in\mathbb{F}_q\backslash\{1\}\}$, then
$F(x)$ is $(c,p^{\gcd(k,m)}+1)$-uniform;
\item[\rm{(2)}] When  $H(x)=x^{q+1}$,  then
$F(x)$ is AP$c$N for  $c\in\{(c_1,0):c_1\in\mathbb{F}_q\backslash\{1\}\}$.
\end{enumerate}
\end{prop}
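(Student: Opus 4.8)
The plan is to specialize Definition~\ref{def} to the first component $G(x)=\Tr^n_m(x)=x+\overline{x}$ (writing $\overline{x}=x^q$), exactly as in \eqref{Tr-1}--\eqref{Tr-2}, so that computing $\,_c\Delta_F(a,b)$ amounts to counting $x\in\mathbb{F}_{q^2}$ solving the pair $(1-c_1)\Tr^n_m(x)+tc_2H(x)=b_1$ and $H(x+a)-(c_1-c_2)H(x)-c_2\Tr^n_m(x)=b_2$, the term $\Tr^n_m(a)$ being absorbed into $b_1$. For part~(1) I use $H(x)=\Tr^n_m(\gamma x^{p^k+1})=\gamma x^{p^k+1}+\overline{\gamma}\,\overline{x}^{p^k+1}$ with $\overline{\gamma}=\gamma^q$, the hypothesis $\gamma^q+\gamma\neq0$ meaning $\Tr^n_m(\gamma)\neq0$; for part~(2) I use $H(x)=x^{q+1}=x\overline{x}$. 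I split every argument according to $c_2=0$ versus $c_2\neq0$, since when $c_2=0$ the first equation pins $\Tr^n_m(x)$ to a single value.

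For $c=(c_1,0)$ with $c_1\neq1$ I invoke \eqref{c-def2}: the first equation forces $\Tr^n_m(x)=s$ for one fixed $s\in\mathbb{F}_q$, confining $x$ to the single affine line $\{x:x+\overline{x}=s\}$, on which $\overline{x}=s-x$. In part~(2) the second equation expands to $(1-c_1)x\overline{x}+\overline{a}x+a\overline{x}+a\overline{a}=b_2$, and substituting $\overline{x}=s-x$ gives a quadratic in $x$ with leading coefficient $-(1-c_1)\neq0$, hence at most two solutions; taking $a=0$ reduces the system to $x+\overline{x}=s,\ x\overline{x}=P$ with $P=b_2/(1-c_1)$, and choosing $s,P$ so that $T^2-sT+P$ is irreducible over $\mathbb{F}_q$ (possible by Lemma~\ref{2-equation}) makes its conjugate roots $x_0,x_0^q$ two genuine solutions, proving $F$ is AP$c$N. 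The same restriction handles part~(1) for $c_2=0$: parametrizing the line by $x=x_*+u\theta$, $u\in\mathbb{F}_q$, with $\theta\in\ker\Tr^n_m\setminus\{0\}$ (so that $\theta^{p^k+1}\in\mathbb{F}_q$), the equation $H(x+a)-c_1H(x)=b_2$ becomes a Gold equation $B_1u^{p^k+1}+B_2u^{p^k}+B_3u+B_4=0$ over $\mathbb{F}_q$ with $B_1=(1-c_1)\theta^{p^k+1}\Tr^n_m(\gamma)\neq0$; Lemma~\ref{mainlem} then caps the count at $p^{\gcd(k,m)}+1$, the value being realized by a suitable choice of $a,b$.

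For the general bound $\,_c\Delta_F\leq6$ when $\gcd(k,m)=1$ and $c_2\neq0$ I follow the elimination of Theorem~\ref{inv}. First eliminate $\Tr^n_m(x)$ between the two equations (legitimate since the relevant coefficient $tc_2^2+(1-c_1)c_2+(1-c_1)^2$ is nonzero by \eqref{tc1c2}), obtaining a relation of the shape $\gamma x\cdot x^{p^k}+\overline{\gamma}\,\overline{x}\cdot\overline{x}^{p^k}=p_0\Tr^n_m(x)+r_0$ together with the additive relation got by eliminating $H(x)$. The decisive observation is that, on setting $U=\gamma x^{p^k}$ and $V=\overline{\gamma}\,\overline{x}^{p^k}$, these two relations are \emph{linear} in $(U,V)$ with coefficient rows $(a,\overline{a})$ and $(x,\overline{x})$; when the determinant $a\overline{x}-\overline{a}x$ is nonzero one solves for $U$, producing a single equation that expresses $x^{p^k}$ as a low-bidegree rational function of $(x,\overline{x})$, and whose $q$-conjugate holds automatically because all combined constants lie in $\mathbb{F}_q$. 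Clearing denominators and exploiting $\gcd(k,m)=1$ to tame the Frobenius factor should bound the number of $\mathbb{F}_{q^2}$-solutions by $6$, mirroring the degree-$7$-minus-a-forced-root count of Theorem~\ref{inv}.

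The remaining strata must be dealt with separately: the degenerate locus $a\overline{x}=\overline{a}x$, i.e.\ $x\in a\mathbb{F}_q$, where $x=\lambda a$ with $\lambda\in\mathbb{F}_q$ and the system collapses to a Gold equation in $\lambda$ over $\mathbb{F}_q$ bounded by $p^{\gcd(k,m)}+1$ through Lemma~\ref{mainlem}; the special configurations $a=0$ and $c_1=1$ (where the first equation instead fixes $H(x)$); and the candidate solutions $x\in\{0,-a\}$, which as in Theorem~\ref{inv} tend to be forced and thus trim the naive degree down to $6$. I expect the genuine obstacle to be precisely this degree bookkeeping in the generic stratum, namely showing that after clearing denominators the solution count is an \emph{absolute} constant $6$ independent of $p$; this is exactly where the coprimality $\gcd(k,m)=1$ is indispensable, since without it the Frobenius exponent $p^k+1$ is not controlled and the argument of Theorem~\ref{inv} does not transfer.
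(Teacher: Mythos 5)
Your handling of part (2) and of the $c_2=0$ stratum of part (1) is sound and is exactly the reduction the paper intends via \eqref{c-def2}: the first equation pins $\Tr^n_m(x)$ to a single value $s$, and on the line $\{x: x+\overline{x}=s\}$ the second equation becomes a genuine quadratic with leading coefficient $-(1-c_1)\ne0$ in case (2), respectively a Bluher-type equation $B_1u^{p^k+1}+B_2u^{p^k}+B_3u+B_4=0$ over $\mathbb{F}_q$ with $B_1=(1-c_1)\theta^{p^k+1}\Tr^n_m(\gamma)\ne0$ in case (1). Since the paper gives no argument at all for part (2), your irreducible-quadratic witness showing the value $2$ is attained is a welcome addition. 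For case (1) you still owe the attainability of $p^{\gcd(k,m)}+1$: invoke the counting clause of Lemma~\ref{mainlem}, and note that when $r=m/\gcd(k,m)=2$ that clause returns $0$, so the value must instead be reached through the degenerate case $v^{p^k+1}=\mathrm{const}$ with $\gcd(p^k+1,p^m-1)=p^{\gcd(k,m)}+1$.

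The genuine gap is the bound $\,{}_c\Delta_F\le 6$ for $c_2\ne0$ --- the only part the paper claims ``can be similarly proved as in Theorem~\ref{inv}'' --- and you yourself leave it as an expectation rather than a proof. Your elimination is correct as far as it goes: on the locus $a\overline{x}-\overline{a}x\ne0$ the $2\times2$ linear system in $(U,V)=\left(\gamma x^{p^k},\overline{\gamma}\,\overline{x}^{p^k}\right)$ yields $x^{p^k}=R(x,\overline{x})$ with $R$ of low bidegree. But this is precisely where the analogy with Theorem~\ref{inv} breaks down: there, the inverse function produces, after clearing denominators, an honest polynomial in $x$ of degree $7$, whereas here the surviving Frobenius term means that substituting $\overline{x}=x^q$ leaves a polynomial of degree on the order of $\max(p^k,q)$, and no constant bound on the number of roots follows from degree counting. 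Asserting that $\gcd(k,m)=1$ will ``tame the Frobenius factor'' is not an argument; one would need, for instance, to recognize the eliminated relation as the $\mathbb{F}_p$-affine condition $\Tr^n_m\!\left(\gamma ax^{p^k}+\gamma a^{p^k}x\right)+\lambda\Tr^n_m(x)=\mu$, whose solution set is a coset of a kernel of size at least $q$, and then control its intersection with the quadratic equation $tc_2\Tr^n_m\!\left(\gamma x^{p^k+1}\right)+(1-c_1)\Tr^n_m(x)=b_1$; nothing in your sketch accomplishes this. There is also an internal tension you should confront: your own $c_2=0$ count is $p+1$, which exceeds $6$ once $p\ge7$, so the ``$\le6$'' clause cannot cover all $c$ and your proof must delimit exactly which stratum it addresses. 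As written, the central claim of part (1) remains unproved.
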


\begin{remark}
In Proposition~\textup{\ref{x^(p^k+1)+y^(p^k+1)-L(x+y)}}, $F(x,y)$ has the univariate form $F_1(z)=\frac{1}{\left(\beta-\overline{\beta}\right)^{p^k+1}}((\beta^{p^k+1}+\gamma)\overline{z}^{p^k+1}+(\overline{\beta}^{p^k+1}+\gamma)z^{p^k+1}-(\overline{\beta}^{p^k}\beta
+\gamma)\overline{z}z^{p^k}-(\overline{\beta}\beta^{p^k}
+\gamma)\overline{z}^{p^k}z)+\beta L(\frac{(\overline{\beta}-1)z-(\beta -1)\overline{z}}{\overline{\beta}-\beta}) $,  while $F$ from Proposition~\textup{\ref{inv1}(1)} has the univariate form  $F_2(z)=\beta\gamma z^{p^k+1}+\beta\overline{\gamma} \overline{z}^{p^k+1}+z+\overline{z}$ by \eqref{univariate}. We can see that $F_1(z)$ is not equal to $F_2(z)$. Recall that   the $c$-differential uniformity of a given function $F(x)$ is preserved through $F\circ L$ (note that it is not preserved through $L_1\circ F\circ L_2$ for affine permutations $L_1$ and $L_2$) and is not invariant under EA-equivalence and CCZ-equivalence mentioned in~\textup{\cite{HPRS}}. Therefore, we claim that  $F$ of Proposition~\textup{\ref{inv1}(1)} is not equivalent to the function
$F$ of Proposition~\textup{\ref{x^(p^k+1)+y^(p^k+1)-L(x+y)}}. So do the functions $F$ of Proposition~\textup{\ref{inv1}(2)} and of Proposition~\textup{\ref{xy-xy+L(x+y)}}.
\end{remark}

\begin{prop}\label{x^(q+1)}
Let $H$ be a function from $\mathbb{F}_{q^2}$ to $\mathbb{F}_{q}$ and $F(x)=\left(x^{q+1}, H(x)\right)$. Let $c=(c_1,0)\in\mathbb{F}_q\backslash\{1\}$ and $U=\{x\in\mathbb{F}_{q^2}:x^{q+1}=1\}$. Then $F$ is $(c,\delta)$-uniform if $H(x+a)-c_1H(x)=b_2$ has at most $\delta$ solutions on $\beta U+\frac{a}{c_1-1}$ for any $\beta\in\mathbb{F}_{q^2}$ and $b_2\in\mathbb{F}_q$.
\end{prop}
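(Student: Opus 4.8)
The plan is to specialize the $c$-differential system to the case $c=(c_1,0)$, so that only the first coordinate (the norm $x^{q+1}$) carries the structure, and then to reduce the whole count to the behaviour of $H$ on a single coset of the unit circle $U$. Since $c_2=0$, the reduced system~\eqref{c-def2} shows that, for $a\in\mathbb{F}_{q^2}$ and $b=(b_1,b_2)\in\mathbb{F}_q^2$, the equation $D_{c,a}F(x)=b$ is equivalent to
\[
\left\{
\begin{array}{l}
(x+a)^{q+1}-c_1x^{q+1}=b_1,\\
H(x+a)-c_1H(x)=b_2,
\end{array}
\right.
\]
and I would bound the number of common solutions $x\in\mathbb{F}_{q^2}$ uniformly in $a,b$.

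The technical core is to solve the first equation in isolation. Expanding $(x+a)^{q+1}=x^{q+1}+ax^q+a^qx+a^{q+1}$ rewrites it as $(1-c_1)x^{q+1}+ax^q+a^qx=b_1-a^{q+1}$, an inhomogeneous norm form. Because $c_1\in\mathbb{F}_q\setminus\{1\}$, the substitution $x=y+\frac{a}{c_1-1}$ is designed to annihilate both linear terms $ax^q$ and $a^qx$ simultaneously; this uses crucially that $c_1^q=c_1$, and is exactly where the assumption $c_2=0$ is exploited. What remains is $(1-c_1)y^{q+1}=N_0$ for an explicit constant $N_0\in\mathbb{F}_q$ depending only on $a,b_1,c_1$. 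The computation is routine; the insight is simply choosing the right shift.

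Next I would read off the geometry. The norm $N(y)=y^{q+1}$ is a surjective homomorphism $\mathbb{F}_{q^2}^*\to\mathbb{F}_q^*$ with kernel $U$, so the solution set of $y^{q+1}=N_0$ is a coset $\beta U$ with $\beta^{q+1}=N_0$ when $N_0\neq0$, and the single point $\{0\}$ when $N_0=0$; adopting the convention $\beta=0$ in the latter case lets me write both uniformly as $\beta U$. Undoing the substitution, the first equation therefore has solution set exactly $\beta U+\frac{a}{c_1-1}$. As $(a,b_1)$ range over $\mathbb{F}_{q^2}\times\mathbb{F}_q$, the scalar $N_0$ sweeps all of $\mathbb{F}_q$ and every coset of this form is realized, which is precisely why the hypothesis is quantified over all $\beta\in\mathbb{F}_{q^2}$.

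Finally I would intersect with the second equation: the solutions of the full system are exactly those $x\in\beta U+\frac{a}{c_1-1}$ satisfying $H(x+a)-c_1H(x)=b_2$, of which there are at most $\delta$ by assumption. Since this bounds every table entry $_c\Delta_F(a,b)$, we obtain $_c\Delta_F\leq\delta$, as claimed. There is no deep obstacle here; the only point requiring care is the bookkeeping in the degenerate case $N_0=0$, which must be folded into the coset description $\beta U+\frac{a}{c_1-1}$ (via $\beta=0$) so that the hypothesis genuinely covers every pair $(a,b_1)$ and no solution escapes the count.
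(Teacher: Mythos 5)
Your proposal is correct and follows essentially the same route as the paper's proof: specialize to $c=(c_1,0)$, shift by $x=y+\frac{a}{c_1-1}$ to reduce the first equation to $y^{q+1}=N_0$ whose solution set is a coset $\beta U$, and then invoke the hypothesis on $H$ over $\beta U+\frac{a}{c_1-1}$. Your write-up is in fact slightly more careful than the paper's, which leaves the degenerate case $N_0=0$ and the final intersection with the second equation implicit.
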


\begin{proof}  To completes the proof, we need to prove $F(x+a)-cF(x)=b$ has at most $\delta$ solutions for any $a\in\mathbb{F}_{q^2}$ and $b=(b_1,b_2)\in\mathbb{F}_q^2$. By \eqref{c-def2}, one has
$(x+a)^{q+1}-c_1x^{q+1}=b_1.$
Let $x=y+\frac{a}{c_1-1},$  then $y^{q+1}=\frac{b_1(c_1-1)-ca^{q+1}}{(c_1-1)^2}$, which always has solution with the form $y=\beta U$, $\beta\in\mathbb{F}_{q^2}$ and $\beta^{q+1}=\frac{b_1(c_1-1)-ca^{q+1}}{(c_1-1)^2}$. This completes the proof.
\end{proof}
\begin{example}
 Let $q=2^4$ and $H(x)=\Tr^n_m(x^5)$ and $c=(c_1,0)$. Magma experiments show that $F$ in Theorem~\textup{\ref{x^(q+1)}} is AP$c$N if  $c_1\in\mathbb{F}_{2^2}\backslash\{1\}$ and otherwise, $F$ is $(c,6)$-uniform.
\end{example}
\section{Conclusions}
In this paper, we mainly focused on the construction of bivariate functions $F(x,y)\in\mathbb{F}_q[x,y]$ with low $c$-differential uniformities. By analyzing the relationship between bivariate functions and univariate functions, we gave a new concept of the $c$-differential equation for bivariate functions. By virtue of some known functions, such as the Gold function, the inverse function, the trace function and linearized polynomials, we   proposed four classes of bivariate functions with low $c$-differential uniformity for any $c\in\mathbb{F}_{q}^2\backslash\{(1,0)\}$ while several classes of bivariate functions with low $c$-differential uniformity for  $c\in\mathbb{F}_{q}\backslash\{1\}\times\{0\}$.  In particular, P$c$N and AP$c$N functions could be found from our constructions. In particular, this adds to the very few known classes of P$c$N functions in even characteristic (there are only two non-trivial classes of such, besides sporadic examples).


\begin{thebibliography}{99}
%

\bibitem{AMS22}
N. Anbar, T. Kalayci, W. Meidl, C. Riera, P. St\u anic\u a, {\em On the combinatorial structure of $c$-differentials}, manuscript; Preliminary version presented at  the Ernst Selmer International Workshop, August 2022.

\bibitem{BT} D. Bartoli, M. Timpanella, {\em On a generalization of planar functions}, J. Algebr. Comb. 52 (2020),   187--213.

\bibitem{BC} D. Bartoli, M. Calderini, {\em On construction and (non)existence of $c$-(almost) perfect nonlinear functions}, Finite Fields Appl. 72 (2021), 101835.


\bibitem{BRS} E. Berlekamp, H. Rumsey, G. Solomon, {\em On the solutions of algebraic equations over finite fields}, Information and Control 10:6 (1967),  553--564.


\bibitem{BS} E. Biham, A. Shamir, {\em Differential cryptanalysis of DES-like cryptosystems}, J. Cryptology 4:1 (1991), 3--72.

\bibitem{B} A. W. Bluher, {\em On $x^{q+1}+ax+b$,} Finite Fields Appl. 10:3 (2004), 285--305.

\bibitem{BCJW} N. Borisov, M. Chew, R. Johnson, B. Wagner, {\em Multiplicative Differentials}, In: Daemen J., Rijmen V. (eds.) Fast Software Encryption. LNCS, vol. 2365, Springer, Berlin, Heidelberg, 2002, pp. 17--33.

 \bibitem{CBC}   M. Calderini, L. Budaghyan,  C. Carlet, {\em On known constructions of APN and AB functions and their relation to each other}, Matemati${\rm\check{c}}$ke
Znanosti 25 (2021),  79--105.

  \bibitem{C}  C. Carlet, {\em Relating three nonlinearity parameters of vectorial functions and building APN functions from bent functions}, Des. Codes Cryptogr. 59 (2011),  89--109.

   \bibitem{C1}    C. Carlet, {\em More constructions of APN and differentially $4$-uniform functions by concatenation}, Sci. China Math. 56 (2013), 1373--1384.

\bibitem{EFRS} P. Ellingsen, P. Felke, C. Riera, P. St\u anic\u a, A. Tkachenko, {\em $C$-differentials, multiplicative uniformity and (almost) perfect $c$-nonlinearity}, IEEE Trans. Inf. Theory 66:9 (2020), 5781--5789.

\bibitem{HPRS} S. Hasan, M. Pal, C. Riera, P. St\u{a}nic\u{a},   {\em On the $c$-differential uniformity of certain maps over finite fields}, Des. Codes Cryptogr. 89:2 (2021), 221--239.

\bibitem{KCM} K. H. Kim, J. Choe, S. Mesnager, {\em Solving $X^{q+1}+X+a=0$ over finite fields},  Finite Fields Appl. 70 (2021), 101797.

\bibitem{SCP} S. Mesnager, C. Riera, St\u anic\u a, H. Yan, Z. Zhou, {\em Investigations on $c$-(almost) perfect nonlinear functions},  IEEE Trans. Inf. Theory 67:10 (2021),  6916--6925.

    \bibitem{Leonard-Williams} P. A. Leonard,  K. S. Williams, {\em Quartics over {\rm GF}$(2^n)$}. Proc. Amer. Math. Soc. 36:2 (1972),   347--350.


\bibitem{LZLQ}    K. Li, Y. Zhou, C. Li and L. Qu, {\em Two New Families of Quadratic APN Functions}, IEEE Trans. Inf. Theory 68:7 (2022),  4761--4769.

\bibitem{LRS}
C. Li, C. Riera, P. St\u anic\u a, {\em Dillon's switching method generalized to $c$-differentials}, manuscript; preliminary version in Boolean Functions \& Applic. (BFA), 2022, Paper \#1.

\bibitem{N} K. Nyberg, {\em Differentially uniform mappings for cryptography}, In: Helleseth T. (ed.) EUROCRYPT 1993, LNCS 765, pp. 55--64. Springer, Heidelberg, 1994.

\bibitem{S_WCC} P. St\u{a}nic\u{a},
 {\em C-differential uniformity for functions constructed via the Maiorana-McFarland bent function}, Workshop on Coding \& Cryptography, WCC 2022, Paper \#37.

\bibitem{S}P. St\u{a}nic\u{a}, {\em Low $c$-differential and $c$-boomerang uniformity of the swapped inverse function}, Discrete Math. 344:10 (2021), 112543.

\bibitem{SG} P. St\u{a}nic\u{a}, A. Geary, {\em The $c$-differential behavior of the inverse function under the EA-equivalence}, Cryptogr. Commun. 13:2 (2021), 295--306.

\bibitem{SRT} P. St\u{a}nic\u{a}, C. Riera,   A. Tkachenko, Characters, {\em Weil sums and $c$-differential uniformity with an application to the perturbed Gold function}, Cryptogr. Commun. 13 (2021), 891--907.

\bibitem{T}    H. Taniguchi, {\em On some quadratic APN functions}, Des. Codes Cryptogr. 87 (2019),  1973--1983.

\bibitem{TZJT} Z. Tu, X. Zeng, Y. Jiang, X. Tang, {\em A class of AP$c$N power functions over finite fields of even characteristic},  arXiv:2107.06464v1.

%


\bibitem{WZ} X. Wang, D. Zheng, {\em Several classes of P$c$N power functions over finite fields}, arXiv:2104.12942v1.


%

\bibitem{ZH}  Z. Zha, L. Hu, {\em Some classes of power functions with low $c$-differential uniformity over finite fields}, Des. Codes Cryptogr. 89 (2021), 1193--1210.

\bibitem{ZP} Y. Zhou, A. Pott, {\em A new family of semifields with $2$ parameters}, Adv. Math. 234 (2013),  43--60.
\end{thebibliography}
\end{document}